\documentclass[11pt,reqno]{amsart}
\usepackage{amssymb}
\newtheorem{theorem}{Theorem}[section]

\newtheorem{remark}{Remark}[section]

\newfont{\bb}{msbm10 at 12pt}

\newcommand{\mysection}[1]{\section{#1}\setcounter{equation}{0}}

\newcommand{\bal}{\begin{align}}     \newcommand{\eal}{\end{align}}
\newcommand{\ba}{\begin{array}}      \newcommand{\ea}{\end{array}}
\newcommand{\bc}{\begin{center}}     \newcommand{\ec}{\end{center}}
\newcommand{\be}{\begin{enumerate}}  \newcommand{\ee}{\end{enumerate}}
\newcommand{\beq}{\begin{eqnarray}}  \newcommand{\eeq}{\end{eqnarray}}
\newcommand{\beQ}{\begin{eqnarray*}} \newcommand{\eeQ}{\end{eqnarray*}}
\newcommand{\bi}{\begin{itemize}}    \newcommand{\ei}{\end{itemize}}
\newcommand{\bt}{\begin{tabular}}    \newcommand{\et}{\end{tabular}}
\newcommand{\bdm}{\begin{displaymath}} \newcommand{\edm}{\end{displaymath}}

\newcommand{\ls}{\setlength{\baselineskip}{12pt}
                 \setlength{\parskip}{3mm}}

\usepackage{upgreek}
\usepackage{tensor}
\usepackage{mathrsfs}

\usepackage{enumerate}

\begin{document}


\allowdisplaybreaks

\title[Peeling property]
{Peeling property for the Einstein scalar field equations with the nonzero cosmological constant}

\author[J Li]{Jialue Li$^{1}$}
\author[X Zhang]{Xiao Zhang$^{2,3,4}$}

\address[]{$^{1}$School of Mathematical Sciences, Peking University, Beijing 100871, People's Republic of China}
\address[]{$^{2}$State Key Laboratory of Mathematical Sciences, Academy of Mathematics and Systems Science, Chinese Academy of Sciences, Beijing 100190, People's Republic of China}
\address[]{$^{3}$School of Mathematical Sciences, University of Chinese Academy of Sciences, Beijing 100049, People's Republic of China}
\address[]{$^{4}$Guangxi Center for Mathematical Research, Guangxi University, Nanning 530004, Guangxi, People's Republic of China}

\email{lijialue@pku.edu.cn$^{1}$}
\email{xzhang@amss.ac.cn$^{2,3,4}$}

\begin{abstract}
Inspired by interaction of gravitational waves, dark matters and dark energy, we study the Einstein scalar field equations with the nonzero cosmological constant for the Bondi-Sachs metrics. We provide the asymptotic expansions under the outgoing radiation condition. Unlike the case of the zero cosmological constant, the asymptotic expansions depend on four additional $\Lambda$-independent functions $B(u, \theta, \phi)$, $\tilde{X}(u, \theta, \phi)$, $\tilde{Y}(u, \theta, \phi)$ and $\tilde{I}(u)$. We show that, under suitable coordinate transformation, we can make $B(u, \theta, \phi)=0$. We prove the peeling property for the Einstein scalar field equations if $\tilde{I}=0$, and derive a loss formula of the Bondi energy-momentum for the nonzero cosmological constant. We remark that, for certain real data from gravitational waves, the loss formula is likely to indicate the loss property of the Bondi energy-momentum.

\end{abstract}
\keywords{Bondi-Sachs metric; Scalar field; Peeling property; Bondi energy-momentum.}

\subjclass[2000]{53C50, 83C35}

\date{}

\maketitle \pagenumbering{arabic}
\pagenumbering{arabic}

\mysection{Introduction}
\ls

In general relativity, a spacetime is a four-dimensional Lorentzian manifold whose metric $\mathbf{g}$ satisfies the Einstein field equations
\begin{equation}\label{eq:EinsteinTotal}
    R_{\mu\nu}-\frac{R}{2}\mathbf{g}_{\mu\nu}+\Lambda \mathbf{g}_{\mu\nu}=T_{\mu\nu},
\end{equation}
where $R_{\mu\nu}$ is the Ricci curvature, $R$ is the scalar curvature, $\Lambda$ is the cosmological constant and $T_{\mu\nu}$ is the energy-momentum tensor of matter. In particular, it is a vacuum when $T_{\mu\nu}$ vanishes. 

Gravitational waves are vacuum, wave-like spacetime metrics radiating energy. When the cosmological constant is zero, gravitational waves were firstly studied by Bondi, van der Burg, Metzner for axially symmetric isolated spacetimes and by Sachs for asymptotically flat spacetimes \cite{BBM, S}. They introduced the following Bondi-Sachs metrics 
\begin{equation}
\begin{aligned}\label{eq:BondiSachsMetric}
    \mathbf{g} =&-\bigg( \frac{\mathrm{e}^{2\beta}V}{r} -r^2\Big(\mathrm{e}^{2\gamma}U^2\cosh(2\delta)+2 UW\sinh(2\delta) \\
              &+\mathrm{e}^{-2\gamma}W^2\cosh(2\delta)\Big)\bigg)\mathrm{d}u^2-2\mathrm{e}^{2\beta}\mathrm{d}u\mathrm{d}r \\
              & -2r^2\bigg(\mathrm{e}^{2\gamma}U\cosh(2\delta)+W\sinh(2\delta)\bigg)\mathrm{d}u\mathrm{d}\theta \\
	          & -2r^2\bigg(\mathrm{e}^{-2\gamma}W\cosh(2\delta)+U\sinh(2\delta)\bigg)\sin\theta \mathrm{d}u\mathrm{d}\phi \\
	          & +r^2\bigg(\mathrm{e}^{2\gamma}\cosh(2\delta)\mathrm{d}\theta^2+2\sinh(2\delta)\sin\theta\mathrm{d}\theta \mathrm{d}\phi \\
              & +\mathrm{e}^{-2\gamma}\cosh(2\delta)\sin^2\theta\mathrm{d}\phi^2\bigg)
\end{aligned}
\end{equation}
in coordinates $\{u,\,r,\,\theta,\,\phi\}$ ($u$ is retarded time)
\begin{equation*}
    -\infty<u<\infty,\quad r\geqslant 0,\quad 0\leqslant \theta\leqslant \pi,\quad
    0\leqslant \phi\leqslant 2\pi,
\end{equation*}
where $\beta$, $\gamma$, $\delta$, $U$, $V$ and $W$ are smooth functions of $u$, $r$, $\theta$ and $\phi$, which are defined on $S^2$ for each $u$, i.e., they and their derivatives take the same value at $\phi=0, \,2\pi$. Assuming the {\it outgoing radiation condition} 
\begin{align}
\gamma = \frac{c(u, \theta, \phi)}{r} +O\left(\frac{1}{r^3}\right), \quad \delta =\frac{d(u, \theta, \phi)}{r} +O\left(\frac{1}{r^3}\right), \label{outgoing}
\end{align}
the vacuum Einstein field equations imply \cite{BBM, S, vdB}
\begin{align*}
\beta = O\left(\frac{1}{r^2}\right), \,  U=O\left(\frac{1}{r^2}\right), \, W=O\left(\frac{1}{r^2}\right), \, V=r-2M(u, \theta, \phi)+O\left(\frac{1}{r}\right).
\end{align*}
Physically, $c_u$, $d_u$ are referred to as {\it news functions}, and $M$ is referred to as {\it mass aspect}. Denote
\begin{equation*}
    n^0=1,\quad n^1=\sin\theta\cos\phi,\quad n^2=\sin\theta\sin\phi,\quad n^3=\cos\theta.
\end{equation*}
The Bondi energy-momentum of each null hypersurface is defined as \cite{BBM}
\begin{align*}
    m_\nu(u)=\frac{1}{4\pi}\int_{S^2}M(u,\theta,\phi)n^\nu\mathrm{d}S, \quad \nu=0, 1, 2, 3.
\end{align*}
If the following conditions hold,
\begin{align}
    \int_0 ^{2\pi} c(u, 0, \phi)\mathrm{d}\phi=0, \quad \int_0 ^{2\pi} c(u, \pi, \phi)\mathrm{d}\phi=0, \label{hyz-u}
\end{align}
then the Bondi energy-momentum satisfies the loss property of energy-momentum for any $u$ \cite{BBM, Z2, HYZ}
\begin{align*}
    \frac{\mathrm{d}}{\mathrm{d}u} m_0(u) \leqslant 0,\quad  \frac{\mathrm{d}}{\mathrm{d}u} \left(m_0(u) -\sqrt{\sum _{1\leq i \leq 3} m_i (u) ^2  } \right) \leqslant 0.
\end{align*}

In \cite{P}, Penrose introduced simple spacetimes in order to study gravitational waves in terms of conformal compactification and showed the peeling property of Weyl curvatures, see also \cite{NP1, NP2, NU}. 

Due to the loss property of energy-momentum for the zero cosmological constant, the Bondi energy-momentum can be referred to the total energy-momentum after loss carried away by gravitational waves. But it becomes sophisticated when the cosmological constant is nonzero, in particular, is positive which consists with the cosmological observations. It has been studied extensively for gravitational waves in this case in recent years, e.g. \cite{C, GLSWZ, HC, ABK1, ABK2, ABK3, ABK4, B, DH1, DH2, S1, S2}. Denote $\Lambda $ the cosmological constant. When $\Lambda \neq 0$, a detail asymptotic analysis of Bondi-Sachs metrics was provided \cite{GLSWZ}. Assuming \eqref{outgoing}, it yields
\begin{align*}
\beta = B + & O\left( \frac{1}{r^2}\right), \quad  U=X+O\left(\frac{1}{r}\right), \quad  W=Y+O\left(\frac{1}{r}\right), \\
 & \quad  V=O\left( r^3 \right)-2M(u, \theta, \phi)+O\left(\frac{1}{r}\right),
\end{align*}
where $B(u, \theta, \phi)$, $X(u, \theta, \phi)$, $Y(u, \theta, \phi)$ are boundary values of $\beta$, $U$, $W$ at $r \rightarrow \infty$. In particular, if $c$, $d$ are nonzero, $X$, $Y$ can not be zero. Fortunately, the peeling property still holds although these expansions are un-expected \cite{XZ}. In \cite{HC}, an alternative boundary condition was given in the axi-symmetric case, without assuming outgoing radiation condition, but deforming 2-sphere with
\[
\gamma=\Lambda f(u, \theta)+\frac{c(u, \theta)}{r}+O\Big(\frac{1}{r^3}\Big)
\]
and taking $B=X=Y=0$. In \cite{ABK1, ABK2, ABK3, ABK4}, asymptotics with $\Lambda >0$ was discussed in framework of conformal compactification, and the linearization theory as well as the quadrupole formula were derived. Some relevant works on the linearization theory can also be found in \cite{B, DH1, DH2}. The papers \cite{S1, S2} discussed the asymptotic vacuum and the electromagnetism Newman-Penrose equations as well as Bondi mass for $\Lambda \neq 0$. The boundary condition in \cite{S1, S2} is essentially equivalent to that given in \cite{HC}, where the peeling property was proved with this condition.

When the cosmological constant is zero, for the Einstein scalar field equations, ten conserved Newman–Penrose constants were defined in framework of Newman–Penrose formalism \cite{HC2}, and the Bondi-Sachs formalism was established \cite{LiZ}. Assuming \eqref{outgoing}, it yields
\begin{align*}
\beta = O\left(\frac{1}{r^2}\right), \,  U=O\left(\frac{1}{r^2}\right), \, W=O\left(\frac{1}{r^2}\right), \, V=r-2M(u, \theta, \phi)+O\left(\frac{1}{r}\right),
\end{align*}
and the scalar field
\begin{align*}
\Psi= O\left(\frac{1}{r}\right).
\end{align*}
These expansions result the peeling property \cite{LiZ}. Furthermore, denote 
\begin{align}
    l = c_\theta+2c\cot\theta+d_\phi\csc\theta, \quad \hat{l} =d_\theta+2d\cot\theta-c_\phi\csc\theta, \label{l-barl}\\
    \mathcal{M}(u, \theta, \phi)=M(u, \theta, \phi)-\frac{1}{2}\left(l_\theta+l\cot\theta+\hat{l}_\phi\csc\theta\right).\label{eq:NewMandOldMrelation}
\end{align}
In \cite{LiZ} (see, also \cite{HYZ}), $\mathcal{M}$ is used to define the Bondi energy-momentum
\begin{align*}
    m_\nu(u)=\frac{1}{4\pi}\int_{S^2}\mathcal{M}(u,\theta,\phi)n^\nu\mathrm{d}S, \quad \nu=0, 1, 2, 3.
\end{align*}
then the loss property of energy-momentum holds without assuming \eqref{hyz-u} \cite{LiZ, HYZ}
\begin{align*}
    \frac{\mathrm{d}}{\mathrm{d}u} m_0(u) \leqslant 0,\quad  
    \frac{\mathrm{d}}{\mathrm{d}u} \left(m _0(u) -\sqrt{\sum _{1\leq i \leq 3} m_i (u) ^2  } \right) \leqslant 0.
\end{align*}
If $c=d=0$ at some retarded time $u_0$, the positivity of the Bondi energy-momentum before $u_0$ holds for the Einstein scalar field equations \cite{LiZ}.

The Einstein scalar field equations for the Bondi-Sachs metrics with the zero cosmological constant can be referred to interaction of gravitational waves and dark matters, e.g. \cite{AGM, LiZ}. In this paper, we shall study them with the nonzero cosmological constant. This can be referred to interaction of gravitational waves, dark matters and dark energy when the cosmological constant is positive. Denote by $\nabla$ the Levi-Civita connection of $\mathbf{g}$. For massless scalar field $\Psi$, the energy-momentum tensor is given by
\begin{equation}\label{eq:scalarTDef}
    T_{\mu\nu}=\nabla_\mu\Psi\nabla_\nu\Psi-\frac{1}{2}\mathbf{g}_{\mu\nu}\nabla^\sigma\Psi\nabla_\sigma\Psi.
\end{equation}
The Einstein scalar field equations are
\begin{equation}
    R_{\mu\nu}=\nabla_\mu\Psi\nabla_\nu\Psi. \label{esf}
\end{equation}
Assuming \eqref{outgoing}, we can finally prove $B=0$, therefore
\begin{align*}
\beta = O\left( \frac{1}{r^2}\right), \quad & U=\Lambda \tilde{X}\sin\theta+O\left(\frac{1}{r^2}\right), \quad  W=\Lambda \tilde{Y}\sin\theta+O\left(\frac{1}{r^2}\right), \\
 & V=O\left( r^3 \right)-2M(u, \theta, \phi)+O\left(\frac{1}{r}\right),
\end{align*}
and the scalar field
\begin{align*}
\Psi= \Lambda \tilde{I} +\frac{3 \tilde{I}_u}{r} +O\left(\frac{1}{r^3}\right),
\end{align*}
where $\tilde{I}=\tilde{I}(u)$ depends only on $u$, and $\tilde{X}$, $\tilde{Y}$ satisfy
\begin{align*}
    \tilde{X}_\theta \sin\theta-\tilde{Y}_\phi = \frac{2}{3}c, \quad
    \tilde{Y}_\theta \sin\theta+\tilde{X}_\phi = \frac{2}{3}d.
\end{align*}

With these expansions, we can prove the peeling property if $\tilde{I}=0$ in this case. We also derive a loss formula of the Bondi energy-momentum. Applying it to certain real data from gravitational waves, the loss formula is likely to indicate the loss property of the Bondi energy-momentum.

This paper is organized as follows.
In Section 2, we study the structure of the Einstein scalar field equations with the nonzero cosmological constant for Bondi-Sachs metrics and separate them into seven equations.
In Section 3, we provide asymptotic expansions of the Einstein scalar field equations.
In Section 4, we prove the peeling property for the Einstein scalar field equations if $\tilde{I}=0$.
In Section 5, we provide the Bondi energy-momentum for the nonzero cosmological constant, and derive a loss formula of the Bondi energy-momentum. We remark that, for certain real data from gravitational waves, the loss formula is likely to indicate the loss property of the Bondi energy-momentum.
In Appendix A, we provide explicit formulas relating to the Einstein scalar field equations.
In Appendix B, we provide explicit formulas for certain higher order coefficients in asymptotic expansions in Section 3.
In Appendix C, we provide explicit formulas for coefficients of $\Omega_{00}$ with respect to the cosmological constant in asymptotic expansions in Section 5.

\mysection{Einstein scalar field equations for $\Lambda \neq 0$}
\ls

In this section, we study the structure of the Einstein scalar field equations for Bondi-Sachs metric with the nonzero cosmological constant.

Let $\Psi$ be any smooth scalar field over a Bondi-Sachs spacetime. Denote
\begin{equation}
    \Omega_{\mu\nu} = R_{\mu\nu}-\Lambda\mathbf{g}_{\mu\nu}-\nabla_\mu\Psi\nabla_\nu\Psi, \quad \Omega=\mathbf{g}^{\mu\nu}\Omega_{\mu\nu}. \label{omega}
\end{equation}
By the twice contracted Bianchi identity, we have
\begin{equation}
    \nabla^\mu\bigg(\Omega_{\mu\nu}-\frac{\Omega}{2}\mathbf{g}_{\mu\nu}\bigg)
	=-\Box\Psi \nabla_\nu \Psi.  \label{omega1}
\end{equation}

We denote $x^0=u$, $x^1=r$, $x^2=\theta$, $x^3=\phi$ in \eqref{eq:BondiSachsMetric}. The Einstein scalar field equations with the nonzero cosmological constant are
\begin{equation}\label{eq:EinsteinTotalOmega}
    \Omega_{\mu\nu}=0, \quad \Box\Psi=0
\end{equation}
for $\mu,\,\nu=0,\,1,\,2,\,3$. Same as \cite{BBM, S, vdB, LiZ}, these ten equations are separated into three groups:
\begin{enumerate}
   \item six main equations
\begin{equation}\label{eq:main6eq}
    \Omega_{11}=\Omega_{12}=\Omega_{13}=\Omega_{22}=\Omega_{23}=\Omega_{33}=0,
\end{equation}
   \item one trivial equation
\begin{equation}\label{eq:trivial1eq}
    \Omega_{01}=0,
\end{equation}
   \item three supplementary equations
\begin{equation}\label{eq:supplementary3eq}
    \Omega_{00}=\Omega_{02}=\Omega_{03}=0.
\end{equation}
\end{enumerate}

For the Bondi-Sachs metric \eqref{eq:BondiSachsMetric},
\begin{equation}
    \mathbf{g}_{11}=\mathbf{g}_{12}=\mathbf{g}_{13}=0 \Longrightarrow \mathbf{g}^{00}=\mathbf{g}^{02}=\mathbf{g}^{03}=0, \label{g-zero}
\end{equation}
and the Christoffel symbols of the Bondi-Sachs metrics satisfy \cite{BBM, S, vdB, LiZ}
\begin{equation}\label{eq:Christoffel0g}
    \mathbf{g}^{\alpha\epsilon} \Gamma^0_{\alpha\epsilon}=\frac{2\mathrm{e}^{-2\beta}}{r}.
\end{equation}
Using \eqref{esf}, \eqref{omega1}, we obtain
\begin{equation}\label{eq:gChristoffelBianchi}
    \mathbf{g}^{\alpha\epsilon}
	\left( \frac{\partial \Omega_{\mu\alpha}}{\partial x^\epsilon}
	-\frac{1}{2}\frac{\partial \Omega_{\alpha\epsilon}}{\partial x^\mu}
	-\Gamma^\delta _{\alpha\epsilon} \Omega_{\mu\delta}\right)=0
\end{equation}
for $\mu=0,\,1,\,2,\,3$. 

In the following, we provide details that the six main equations \eqref{eq:main6eq} imply that the trivial equation \eqref{eq:trivial1eq} holds. Moreover, they imply that the three supplementary equations hold everywhere if they hold for at some $r_0>0$ \cite{BBM, S, vdB, LiZ}. Indeed, if we assume that \eqref{eq:main6eq} holds, then
\begin{enumerate}
   \item $\mu=1$: using \eqref{eq:main6eq} and \eqref{g-zero}, \eqref{eq:gChristoffelBianchi} gives
\begin{equation}\label{omega01}
    \mathbf{g}^{01} \frac{\partial \Omega _{10}}{\partial r}-\frac{\mathbf{g}^{01}}{2}\frac{\partial \Omega _{01}}{\partial r}-\frac{\mathbf{g}^{10}}{2}\frac{\partial \Omega _{10}}{\partial r}-\mathbf{g}^{\alpha\epsilon}\Gamma^0_{\alpha\epsilon}\Omega_{10}=-\mathbf{g}^{\alpha\epsilon}\Gamma^0_{\alpha\epsilon}\Omega_{01}=0.\\
\end{equation}
    \item $\mu=A$: using \eqref{eq:trivial1eq} and \eqref{eq:Christoffel0g}, \eqref{eq:gChristoffelBianchi} gives
\begin{equation}\label{eq:Omega0A}
    \frac{\mathrm{e}^{-2\beta}}{r^2}\frac{\partial}{\partial r}\left(r^2\Omega_{0A}\right) =0,\quad A=2,3.\\
\end{equation}
    \item $\mu=0$: using \eqref{eq:trivial1eq}, \eqref{eq:Christoffel0g} and
        $\Omega_{02}=\Omega_{03}=0$, \eqref{eq:gChristoffelBianchi} gives
\begin{equation}\label{eq:Omega00}
    \frac{\mathrm{e}^{-2\beta}}{r^2}\frac{\partial}{\partial r}\left(r^2\Omega_{00}\right) =0.
\end{equation}
\end{enumerate}
Therefore, \eqref{eq:trivial1eq} holds by \eqref{eq:Christoffel0g},
\eqref{omega01}, and  \eqref{eq:supplementary3eq} hold everywhere if they hold at some $r_0>0$ by \eqref{eq:Omega0A}, \eqref{eq:Omega00}. So it only needs to study the six main equations, which separate into two groups:
\begin{enumerate}
    \item four hypersurface equations
        \[ \Omega_{11}=\Omega_{12}=\Omega_{13}=0, \]
	\[ \left(\mathrm{e}^{-2\gamma}\Omega_{22}
	+\mathrm{e}^{2\gamma}\csc^2\theta\Omega_{33}\right)
	\cosh(2\delta)-2\csc\theta \Omega_{23}\sinh(2\delta)=0;  \]
    \item two standard equations
	\[ \mathrm{e}^{-2\gamma}\Omega_{22}-\mathrm{e}^{2\gamma}\csc^2\theta\Omega_{33}=0,  \]
	\[ \left(\mathrm{e}^{-2\gamma}\Omega_{22}
	+\mathrm{e}^{2\gamma}\csc^2\theta\Omega_{33}\right)\sinh(2\delta)
	-2\csc\theta \Omega_{23}\cosh(2\delta)=0.  \]
\end{enumerate}
We conclude that the Einstein scalar field equations \eqref{esf} are equivalent to the following seven equations:
\begin{align}
    \beta_r &=\mathscr{R}_1,  \label{eq:EQ1} \\
    \left(r^4\mathrm{e}^{-2\beta}\left(\mathrm{e}^{2\gamma}U_r\cosh(2\delta)+W_r\sinh(2\delta) \right)\right)_r
    &=\mathscr{R}_2, \label{eq:EQ2} \\
    \left(r^4\mathrm{e}^{-2\beta}
    \left(U_r\sinh(2\delta)+\mathrm{e}^{-2\gamma}W_r\cosh(2\delta)\right)\right)_r &=\mathscr{R}_3,  \label{eq:EQ3} \\
     V_r &=\mathscr{R}_4, \label{eq:EQ4} \\
       (r\gamma)_{ur}\cosh(2\delta)
    +2r(\gamma_u\delta_r+\delta_u\gamma_r)\sinh(2\delta)
    &=\mathscr{R}_5,  \label{eq:EQ5} \\
      (r\delta)_{ur}-2r\gamma_u\gamma_r\sinh(2\delta)\cosh(2\delta)
    &=\mathscr{R}_6,  \label{eq:EQ6} \\
     (r\Psi_u)_{r} &=\mathscr{R}_7, \label{eq:EQ7}
\end{align}
where $\mathscr{R}_1,\ldots,\mathscr{R}_7$ are given in Appendix A. The most important feature is that $\mathscr{R}_1,\ldots,\mathscr{R}_7$ do not contain any derivatives with respect to $x^0=u$.

\mysection{Asymptotic Expansions}
\ls

In this section, we study the power series solutions of the Einstein scalar field equations \eqref{eq:EQ1}-\eqref{eq:EQ7} for Bondi-Sachs metric \eqref{eq:BondiSachsMetric}. We follow the idea of \cite{BBM, S, vdB, LiZ} and assume that $\gamma$, $\delta$ satisfy the outgoing radiating condition
\begin{align}
\gamma &= \frac{c}{r}+\left(-\frac{1}{6}c^3-\frac{3}{2}d^2c+C\right)\frac{1}{r^3}+\frac{\overset{4}{\gamma}}{r^4}+O\left(\frac{1}{r^5}\right), \label{eq:gammaExpand} \\
\delta &= \frac{d}{r}+\left(-\frac{1}{6}d^3+\frac{1}{2}c^2d+D\right)\frac{1}{r^3}+\frac{\overset{4}{\delta}}{r^4}+O\left(\frac{1}{r^5}\right).\label{eq:deltaExpand}
\end{align}
The absence of $r^{-2}$ term in the above expansions prevent appearance of $\ln r $ term in the expansions of unknown in vacuum Einstein field equations \cite{BBM, S, vdB, LiZ}. We refer to \cite{CMS} for polyhomogeneity expansions involving $r^{-2}$ term for $\gamma$ and $\delta$.

For the scalar field $\Psi$, we take the following expansion \cite{LiZ} 
\begin{equation}\label{eq:PsiOriginal}
    \Psi = I(u, \theta, \phi)+\frac{H(u, \theta, \phi)}{r}+ \frac{K(u, \theta, \phi)}{r^2} +  O\left(\frac{1}{r^3}\right).
\end{equation}

\begin{theorem}
Suppose that the Einstein scalar field equations for the nonzero cosmological constant \eqref{eq:EQ1}-\eqref{eq:EQ7} hold for Bondi-Sachs metric \eqref{eq:BondiSachsMetric}, and $c$, $d$, $\Psi$ satisfy \eqref{eq:gammaExpand}, \eqref{eq:deltaExpand}, \eqref{eq:PsiOriginal}. Then there is a coordinate transformations to preserve the metric, and, under the new coordinate systems (still denoted as $u, r, \theta, \phi$), the following asymptotic behaviors hold as $r \rightarrow \infty$
\begin{align}
    \beta &= -\frac{2(c^2+d^2)+9\tilde{I}_u^2}{8r^2} +\frac{\overset{4}{\beta}}{r^4}+O\left(\frac{1}{r^5}\right), \label{eq:betaExpand}\\
    U &= \Lambda \tilde{X}\sin\theta
    -\frac{l}{r^2} 
    +\frac{2\left(2cl+2d\hat{l}+3N(u,\theta,\phi)\right)}{3r^3} \notag \\
    &\quad +\frac{\overset{4}{U}}{r^4}
    +\frac{\overset{5}{U}}{r^5}
    +O\left(\frac{1}{r^6}\right), \label{eq:UExpand} \\
    W &= \Lambda \tilde{Y}\sin\theta
    -\frac{\hat{l}}{r^2}
    +\frac{2\left(2dl-2c\hat{l}+3P(u,\theta,\phi)\right)}{3r^3} \notag \\
    & \quad +\frac{\overset{4}{W}}{r^4}
    +\frac{\overset{5}{W}}{r^5}
    +O\left(\frac{1}{r^6}\right) , \label{eq:WExpand} \\
    V &= -\frac{\Lambda r^3}{3}
    +\Lambda \left(\tilde{X}_\theta\sin\theta +2\tilde{X}\cos\theta+\tilde{Y}_\phi\right)r^2 \notag \\
    &\quad +\frac{r}{4}\Bigg\{\Lambda\left(2c^2+2d^2+ 9 \tilde{I}_u ^2\right)+4\Bigg\} \label{eq:VExpand} \\
    &\quad -2\mathcal{M}(u, \theta, \phi)-\left(l_\theta+l\cot\theta+\hat{l}_\phi\csc\theta\right)
           +\frac{\overset{1}{V}}{r}+O\left(\frac{1}{r^2}\right), \notag \\
    \Psi & =\Lambda \tilde{I}+\frac{3 \tilde{I}_u }{r}+\frac{L}{r^3}+O\left(\frac{1}{r^4}\right). \label{eq:PsiExpand}  
\end{align}
where $l$, $\hat{l}$ are given by \eqref{l-barl}, $\tilde{I}=\tilde{I}(u)$ depends only on $u$, $\tilde{X}$, $\tilde{Y}$ satisfy
\begin{equation}
    \tilde{X}_\theta \sin\theta-\tilde{Y}_\phi = \frac{2}{3}c, \quad
    \tilde{Y}_\theta \sin\theta+\tilde{X}_\phi = \frac{2}{3}d, \label{eq:c-dRestrictTilde}
\end{equation}
and $\overset{4}{\beta}$, $\overset{4}{U}$, $\overset{5}{U}$, $\overset{4}{W}$, $\overset{5}{W}$, $\overset{1}{V}$ are given in Appendix B.
\end{theorem}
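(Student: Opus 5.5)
The plan is to substitute the expansions \eqref{eq:gammaExpand}, \eqref{eq:deltaExpand} and \eqref{eq:PsiOriginal} into the seven equations \eqref{eq:EQ1}--\eqref{eq:EQ7}. We then integrate the hypersurface equations in $r$ one at a time, in the order $\beta\to U,W\to V$, and finally use the evolution equations \eqref{eq:EQ5}--\eqref{eq:EQ7} order by order in $1/r$. The point is that $\mathscr{R}_1,\dots,\mathscr{R}_7$ contain no $u$-derivatives, so each step is an ODE in $r$ whose integration constants are functions of $(u,\theta,\phi)$.

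\textbf{Step 1: $\beta$.} Since $\beta_r=\mathscr{R}_1=\frac{r}{4}\big((\gamma_r^2+\delta_r^2)\ldots\big)+\frac{r}{4}\Psi_r^2$ (schematically), inserting the expansions gives $\beta=B(u,\theta,\phi)-\frac{2(c^2+d^2)+H^2}{8r^2}+\dots$. Here $B$ is an integration constant, and the $r^{-3}$ term vanishes because of the absent $r^{-2}$ terms in $\gamma,\delta$ and the structure of $\Psi$.

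\textbf{Step 2: $U,W$.} Integrating \eqref{eq:EQ2}, \eqref{eq:EQ3} twice produces integration constants $X,Y$ (the $r^0$ terms) and $N,P$ (the $r^{-3}$ terms). A possible $\ln r$ term is excluded by the outgoing condition, as in \cite{LiZ}. The $r^{-1}$ coefficient is forced to be proportional to $B_\theta$ and $B_\phi$, and the $r^{-2}$ coefficients come out as $-l,-\hat l$ plus terms containing $B$ and $I$.

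\textbf{Step 3: $V$.} Integrating \eqref{eq:EQ4} gives the $-\Lambda r^3/3$ term (times $e^{2B}$ factors), an $r^2$ term that is a divergence of $(X,Y)$, an $r$ term, and the integration constant $-2M$.

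\textbf{Step 4: evolution equations.} Expanding \eqref{eq:EQ5}, \eqref{eq:EQ6} in powers of $r$, the highest orders, which carry explicit $\Lambda$ factors, produce constraints. The leading $\Lambda$-order of the $\gamma$-equation relates $c$ to $X_\theta\sin\theta-Y_\phi$ and similar combinations involving $B$; this is where the $\Lambda$-dependence of $X,Y$ enters. Likewise the leading orders of \eqref{eq:EQ7} give $\Lambda$-weighted relations among $I$, $H$ and $K$. Writing $X=\Lambda\tilde X\sin\theta$ and $Y=\Lambda\tilde Y\sin\theta$ should then yield \eqref{eq:c-dRestrictTilde}. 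For the scalar field, I expect the $r^1$ and $r^0$ orders to give $\Lambda$-weighted equations forcing $I_\theta=I_\phi=0$ after $B$ is removed, together with $H=3I_u/\Lambda$. Setting $I=\Lambda\tilde I$ gives $H=3\tilde I_u$, and the next order forces $K=0$, leaving $L$ free at order $r^{-3}$.

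\textbf{Step 5: removing $B$.} To remove $B$ I would use the residual gauge freedom of Bondi--Sachs coordinates: a supertranslation-type change $u\mapsto \tilde u(u,\theta,\phi)$ combined with a compensating change in $r$ and in the angles. This preserves the form \eqref{eq:BondiSachsMetric} and \eqref{outgoing}. Since $e^{2\beta}$ multiplies $du\,dr$, choosing $\partial_u\tilde u=e^{2B}$ at leading order kills $B$. The main obstacle is verifying that this transformation keeps the absence of the $r^{-2}$ terms in $\gamma,\delta$ and the angular part asymptotically round, at least to the needed order. Because $B$ depends on the angles, the transformation is not a pure reparametrization of $u$, so I would construct it as an asymptotic series in $1/r$ and check it order by order, following \cite{GLSWZ}.

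\textbf{Step 6: remaining coefficients.} With $B=0$, I substitute back. The $r$ coefficient of $V$ becomes $\frac14\{\Lambda(2c^2+2d^2+9\tilde I_u^2)+4\}$, the $\beta$ coefficient becomes $-(2(c^2+d^2)+9\tilde I_u^2)/8$, and redefining $M$ through \eqref{eq:NewMandOldMrelation} gives the stated constant term. The higher coefficients $\overset{4}{\beta},\overset{4}{U},\dots,\overset{1}{V}$ are obtained by continuing the same integration to the next orders and are recorded in Appendix B.
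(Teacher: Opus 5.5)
\paragraph{Where the proposal breaks.} Your overall plan matches the paper: integrate \eqref{eq:EQ1}--\eqref{eq:EQ4} with free data $B,X,Y,N,P,M$, remove $B$ by an asymptotic Bondi-gauge change, and read off constraints from the leading orders of \eqref{eq:EQ5}--\eqref{eq:EQ7}. The $r^0$ order of \eqref{eq:EQ5}, \eqref{eq:EQ6} does give \eqref{eq:c-dRestrictTilde} once you write $X=\Lambda\tilde X\sin\theta$, $Y=\Lambda\tilde Y\sin\theta$.

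The gap is in the scalar field. You expect ``$\Lambda$-weighted'' leading orders, apparently of \eqref{eq:EQ7}, to force $I_\theta=I_\phi=0$, and that cannot happen. With $B=0$, $\mathscr{R}_7$ is $O(1)$, so there is no $r^1$ order. Its $r^0$ and $r^{-1}$ coefficients, matched against $(r\Psi_u)_r=I_u+O(r^{-2})$, give only
\[
I_u=\frac{\Lambda}{3}H-XI_\theta-Y\csc\theta\,I_\phi,\qquad \frac{\Lambda}{3}K+\frac12\Big(I_{\theta\theta}+\cot\theta\,I_\theta+\csc^2\theta\,I_{\phi\phi}\Big)=0 .
\]
These merely determine $H$ and $K$ from $I$ and say nothing about the angular dependence of $I$. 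So along your route neither $H=3\tilde I_u$ nor $K=0$ follows; $K=0$ would need $\Delta_{S^2}I=0$.

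\paragraph{The missing step.} The missing idea is the $r^{-1}$ order of the standard equations \eqref{eq:EQ5}, \eqref{eq:EQ6}. Because $\gamma,\delta$ have no $r^{-2}$ term, their left-hand sides are $O(r^{-3})$. Meanwhile the scalar stress terms in $\mathscr{R}_5$ and $\mathscr{R}_6$ contribute, after $B=0$, the $r^{-1}$ coefficients $\frac14\big(I_\theta^2-\csc^2\theta\,I_\phi^2\big)$ and $\frac12\csc\theta\,I_\theta I_\phi$. These come from $e^{2\beta}(e^{-2\gamma}\Psi_\theta^2-e^{2\gamma}\csc^2\theta\,\Psi_\phi^2)/(4r)$ and $e^{2\beta}\cosh(2\delta)\csc\theta\,\Psi_\theta\Psi_\phi/(2r)$, and in the paper they are the only $r^{-1}$ terms. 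These quadratic, $\Lambda$-independent conditions force $I_\theta=I_\phi=0$. Only then does \eqref{eq:EQ7} collapse to $I_u=\Lambda H/3$ and $\Lambda K/3=0$, which gives $H=3\tilde I_u$ and $K=0$.

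\paragraph{Two smaller points.} First, in Step 1 the $r^{-3}$ coefficient of $\beta$ is $-HK/3$, from $\frac r4\Psi_r^2$. Its vanishing therefore depends on $K=0$ from this later step, not on the structure of $\Psi$ alone. Second, in Step 5 the condition $\hat B=0$ is not simply $\partial_u\tilde u=e^{2B}$. Once $u_0$ depends on the angles, $r_{-1}\neq1$ is fixed by the $\hat r^4$ coefficient of the determinant condition, and $u_1,\theta_1,\phi_1$ are fixed by $\hat{\mathbf{g}}_{11}=\hat{\mathbf{g}}_{12}=\hat{\mathbf{g}}_{13}=0$. All of these feed into the equation the paper solves for $u_{0,\hat u}$.
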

\begin{proof}
    From \eqref{eq:EQ1}-\eqref{eq:EQ4}, we can derive the following asymptotic expansions under \eqref{eq:gammaExpand}, \eqref{eq:deltaExpand}, \eqref{eq:PsiOriginal} (see also \cite{GLSWZ})
\begin{align*}
    \beta &= B-\frac{2(c^2+d^2)+H^2}{8r^2}+O\left(\frac{1}{r^3}\right),\\ 
    U &= X+\frac{2\mathrm{e}^{2B}B_\theta}{r}+O\left(\frac{1}{r^2}\right),\\
    W &= Y+\frac{2\mathrm{e}^{2B}B_\phi\csc\theta}{r}+O\left(\frac{1}{r^2}\right),\\
    V &=-\frac{1}{3}\mathrm{e}^{2B}\Lambda r^3 + \bigg(X_\theta +X\cot\theta+Y_\phi\csc\theta \bigg)r^2+O(r),
\end{align*}
where $B(u, \theta, \phi)$, $X(u, \theta, \phi)$, $Y(u, \theta, \phi)$ are boundary values of $\beta$, $U$, $W$ as $r \rightarrow \infty$. 

In the following we show that we can take more general coordinate transformation than that in \cite{GLSWZ} to reduce $B=0$. Indeed, let
\begin{equation}\label{eq:coordinateTransform}
\left\{    
    \begin{aligned}
        u &= u_0(\hat{u},\hat{\theta},\hat{\phi})+\frac{u_1(\hat{u},\hat{\theta},\hat{\phi})}{\hat{r}}
	+O\left(\frac{1}{\hat{r}^2}\right), \\
        r &= r_{-1}(\hat{u},\hat{\theta},\hat{\phi})\hat{r}+r_0(\hat{u},\hat{\theta},\hat{\phi})+O\left(\frac{1}{\hat{r}}\right), \\
        \theta &= \hat{\theta}+\frac{\theta_1(\hat{u},\hat{\theta},\hat{\phi})}{\hat{r}}+O\left(\frac{1}{\hat{r}^2}\right), \\
        \phi &= \hat{\phi}+\frac{\phi_1(\hat{u},\hat{\theta},\hat{\phi})}{\hat{r}}+O\left(\frac{1}{\hat{r}^2}\right). 
    \end{aligned} 
\right.
\end{equation}
Denote $\hat{\mathbf{g}}_{\mu\nu} =\mathbf{g}(\partial_\mu, \partial_\nu)$. We have
\begin{align*}
    \hat{\mathbf{g}}_{01} &=
    -r_{-1}u_{0,\hat{u}}\Bigg\{ 
    \frac{1}{3}\Lambda r_{-1} u_1\mathrm{e}^{4B}
    +\mathrm{e}^{2B}
    +r_{-1}X\left( u_1 X -\theta_1\right)  \\
    &\quad 
    +r_{-1}Y\left(u_1 Y -\phi_1\sin\hat{\theta}\right)
    \Bigg\}
     +O\left(\frac{1}{\hat{r}}\right)  \\
    \hat{\mathbf{g}}_{11} &= \frac{r_{-1}}{\hat{r}^2} \Bigg\{ \frac{1}{3}\Lambda r_{-1} u_1^2\mathrm{e}^{4B}
    +2u_1\mathrm{e}^{2B}
    +r_{-1}\left( u_1X-\theta_1\right)^2
     \\
    &\quad 
    +r_{-1}\left(u_1Y-\phi_1\sin\hat{\theta}\right)^2
    \Bigg\} +O\left(\frac{1}{\hat{r}^2}\right)  \\
    \hat{\mathbf{g}}_{12} &= r_{-1}\Bigg\{ -\frac{1}{3}\Lambda r_{-1} u_{0,\hat{\theta}} u_1\mathrm{e}^{4B}
    -u_{0,\hat{\theta}}\mathrm{e}^{2B}
    - r_{-1} Y u_{0,\hat{\theta}} \left(u_1Y-\phi_1\sin\hat{\theta}\right) \\
    &\quad 
    -r_{-1}\left(u_1X-\theta_1\right)\left(u_{0,\hat{\theta}}X-1\right) \Bigg\} +O\left(\frac{1}{\hat{r}}\right) \\
    \hat{\mathbf{g}}_{13} &=
    r_{-1}\Bigg\{ -\frac{1}{3}\Lambda r_{-1} u_{0,\hat{\phi}} u_1\mathrm{e}^{4B}
    -u_{0,\hat{\phi}}\mathrm{e}^{2B}
    -r_{-1}X u_{0,\hat{\phi}}\left( u_1 X -\theta_1\right) \\
    &\quad -r_{-1}\left( u_1Y-\phi_1\sin\hat{\theta}\right)
    \left(u_{0,\hat{\phi}}Y-\sin\hat{\theta}\right)
    \Bigg\}  +O\left(\frac{1}{\hat{r}}\right)  
\end{align*}
\begin{equation}\label{eq:cross}
\begin{aligned}
    \hat{\mathbf{g}}_{22}&\hat{\mathbf{g}}_{33}-\hat{\mathbf{g}}_{23}^2-\hat{r}^4\sin^2\hat{\theta} \\
    &=\hat{r}^4\Bigg\{\frac{1}{3}r_{-1}^4\bigg[ 
        \Lambda \mathrm{e}^{4B}\left(u_{0,\hat{\theta}}^2\sin^2\hat{\theta}
        +u_{0,\hat{\phi}}^2 \right) \\
    &\quad 
        +3\left(X u_{0,\hat{\theta}} \sin\hat{\theta}
        + Yu_{0,\hat{\phi}}-\sin\hat{\theta}\right)^2
    \bigg]  -\sin^2\hat{\theta}  
    \Bigg\}  \\
    &\quad +\hat{r}^3 \Bigg\{ \frac{4}{3}r_0 r_{-1}^3\bigg[ \mathrm{e}^{4B}\Lambda \left(u_{0,\hat{\theta}}^2 \sin^2\hat{\theta}
    + u_{0,\hat{\phi}}^2 \right) \\
    &\quad 
    +3 \left(X u_{0,\hat{\theta}} \sin\hat{\theta}+Yu_{0,\hat{\phi}}-\sin\hat{\theta}\right)^2  \bigg]  + \Theta
    \Bigg\}
    +O\left(\bar{r}^2\right),
\end{aligned}
\end{equation}
where $\Theta$ is given in the Appendix B, which doesn't contain $r_0$.

Comparing the $\hat{r}^4$-coefficient in \eqref{eq:cross}, we can obtain
\[ r_{-1}=\left(\frac{3}{\Lambda \mathrm{e}^{4B}\left(u_{0,\hat{\theta}}^2
        +u_{0,\hat{\phi}}^2\csc^2\hat{\theta} \right) 
        +3\left(X u_{0,\hat{\theta}} 
        + Yu_{0,\hat{\phi}}\csc\hat{\theta}-1\right)^2}\right)^{\frac{1}{4}} >0. \]
Using $\hat{\mathbf{g}}_{11}=\hat{\mathbf{g}}_{12}=\hat{\mathbf{g}}_{13}=0$, we obtain
\begin{align*}
    0 &= \frac{1}{3}\Lambda r_{-1} u_1^2 \mathrm{e}^{4B}
    +2 u_1\mathrm{e}^{2B}
    +r_{-1} \left( u_1X-\theta_1\right)^2
    +r_{-1} \left( u_1Y -\phi_1\sin\hat{\theta}\right)^2, \\
    0 &=-\frac{1}{3}\Lambda r_{-1} u_{0,\hat{\theta}} u_1\mathrm{e}^{4B}
    -u_{0,\hat{\theta}}\mathrm{e}^{2B}
    - r_{-1} Y u_{0,\hat{\theta}} \left(u_1Y-\phi_1\sin\hat{\theta}\right),  \\
    &\quad 
    -r_{-1}\left(u_1X-\theta_1\right)\left(u_{0,\hat{\theta}}X-1\right) \\
    0 &= -\frac{1}{3}\Lambda r_{-1} u_{0,\hat{\phi}} u_1\mathrm{e}^{4B}
    -u_{0,\hat{\phi}}\mathrm{e}^{2B}
    -r_{-1}X u_{0,\hat{\phi}}\left( u_1 X -\theta_1\right) \\
    &\quad -r_{-1}\left( u_1Y-\phi_1\sin\hat{\theta}\right)
    \left(u_{0,\hat{\phi}}Y-\sin\hat{\theta}\right),
\end{align*}
thus we can solve $u_1$, $\theta_1$, $\phi_1$ for given $u_0$, $r_{-1}$. Now we choose
\[ u_{0,\hat{u}}= \Bigg\{ \frac{1}{3}\Lambda r_{-1} ^2 u_1\mathrm{e}^{4B}
    +r_{-1}\mathrm{e}^{2B}
    +r_{-1} ^2 X\left( u_1 X -\theta_1\right)  
    +r_{-1} ^2 Y\left(u_1 Y -\phi_1\sin\hat{\theta}\right)\Bigg\}^{-1},  \]
we obtain
\[ \hat{B} = 0 \]
Finally, comparing the $\hat{r}^3$-coefficient in \eqref{eq:cross}, we obtain
\begin{align*}
r_0= -\frac{3\Theta}{4 r_{-1}^3}\bigg[ \mathrm{e}^{4B}\Lambda \left(u_{0,\hat{\theta}}^2 \sin^2\hat{\theta}
    + u_{0,\hat{\phi}}^2 \right) 
    +3 \left(X u_{0,\hat{\theta}} \sin\hat{\theta}+Yu_{0,\hat{\phi}}-\sin\hat{\theta}\right)^2  \bigg]^{-1}.
\end{align*}

It is straightforward that the right hand sides of \eqref{eq:EQ5}, \eqref{eq:EQ6} are
\begin{align*}
    \mathscr{R}_5 &= \frac{2\Lambda c+3X\cot\theta-3X_\theta+3Y_\phi\csc\theta}{6}
+\frac{I_\theta^2-I_\phi^2\csc^2\theta}{4r}+O\left(\frac{1}{r^2}\right), \\
    \mathscr{R}_6 &= \frac{2\Lambda d+3Y\cot\theta-3Y_\theta-3X_\phi\csc\theta}{6}+\frac{I_\theta I_\phi \csc\theta}{2r}
+O\left(\frac{1}{r^2}\right).
\end{align*}
As the left hand sides of \eqref{eq:EQ5}, \eqref{eq:EQ6} are both $O\left(\frac{1}{r^3}\right)$, we can obtain
\begin{align*}
    2\Lambda c+3X\cot\theta-3X_\theta+3Y_\phi\csc\theta &=0,\\
    2\Lambda d+3Y\cot\theta-3Y_\theta-3X_\phi\csc\theta &=0,\\
    I_\theta^2-I_\phi^2\csc^2\theta &=0,\\
    I_\theta I_\phi \csc\theta &=0.
\end{align*}
Denote $X=\Lambda\tilde{X}\sin\theta$, $W=\Lambda\tilde{Y}\sin\theta$. Then \eqref{eq:c-dRestrictTilde} follows. Moreover, 
\begin{align*}
I _\theta =I _\phi =0, \quad I=I(u).
\end{align*}
With this condition, the right hand side of \eqref{eq:EQ7} is
\begin{equation*}
    \mathscr{R}_7 =\frac{\Lambda H}{3}+\frac{\Lambda K}{3r}+O\left(\frac{1}{r^2}\right),
\end{equation*}
while the left hand side of \eqref{eq:EQ7} is
    \[ (r\Psi_u)_r=I_u-\frac{K_u}{r^2}-\frac{2L_u}{r^3}+O\left(\frac{1}{r^4}\right). \]
Denote $I=\Lambda \tilde{I}$. This gives that
\begin{align*}
K=0, \quad H = 3 \tilde{I}_u.
\end{align*}
We can obtain the asymptotic expansions \eqref{eq:betaExpand}-\eqref{eq:PsiExpand}.
\end{proof}

\mysection{Peeling Property}
\ls

In this section, we prove the peeling property for the Einstein scalar field equations with the nonero cosmological constant.

Denote $\mathrm{i}=\sqrt{-1}$. The Bondi-Sachs metric \eqref{eq:BondiSachsMetric} has the following null tetrad \cite{XZ}
\begin{equation}\label{eq:NPbasis}
\begin{aligned}
    \hat{e}_0 &=\boldsymbol{l}=\mathrm{e}^{-2\beta}\left(\frac{\partial}{\partial u}-\frac{V}{2r}\frac{\partial}{\partial r}+U\frac{\partial}{\partial\theta}+W\csc\theta \frac{\partial}{\partial \phi}\right), \\
    \hat{e}_1 &=\boldsymbol{k} =\frac{\partial}{\partial r}, \\
    \hat{e}_2 &=\boldsymbol{m} =\frac{\mathrm{e}^{-\gamma}\big(1-\mathrm{i}\sinh(2\delta)\big)}{r\sqrt{2\cosh(2\delta)}}\frac{\partial}{\partial \theta}
    +\frac{\mathrm{i}\mathrm{e}^\gamma \sqrt{\cosh(2\delta)}\csc\theta}{\sqrt{2}r}\frac{\partial}{\partial \phi}, \\
    \hat{e}_3 &=\bar{\boldsymbol{m}}=\frac{\mathrm{e}^{-\gamma}\big(1+\mathrm{i}\sinh(2\delta)\big)}{r\sqrt{2\cosh(2\delta)}}\frac{\partial}{\partial\theta}
    -\frac{\mathrm{i}\mathrm{e}^\gamma\sqrt{\cosh(2\delta)}\csc\theta}{\sqrt{2}r}\frac{\partial}{\partial \phi}.
\end{aligned}
\end{equation}
Under this null tetrad, both the metric matrix $(\hat{\mathbf{g}}_{\mu\nu})=(\mathbf{g}(\hat{e}_\mu,\hat{e}_\nu))$ and its inverse matrix $(\hat{\mathbf{g}}^{\mu\nu})$ are
\begin{equation*}
    \begin{pmatrix}
        0 & -1 & 0 & 0 \\
        -1 & 0 & 0 & 0 \\
        0 & 0 & 0 & 1 \\
        0 & 0 & 1 & 0
    \end{pmatrix}.
\end{equation*}
The structure coefficients $\hat{C}_{\mu\nu}^\sigma$ of the tetrad satisfy
\begin{equation*}
   [\hat{e}_\mu,\hat{e}_\nu]=\hat{C}_{\mu\nu}^\sigma \hat{e}_\sigma, \quad \hat{C}_{\mu\nu\sigma}=\hat{C}_{\mu\nu}^\tau \hat{\mathbf{g}}_{\tau\sigma}.
\end{equation*}
Denote $\langle X,Y\rangle=\mathbf{g}(X,Y)$. Use the Koszul formula \cite{O0}
\begin{align*}
	2\langle \nabla_X Y,Z\rangle
    &=X\langle Y,Z\rangle+Y\langle Z,X\rangle-Z\langle X,Y\rangle \\
	&\quad +\langle [X,Y],Z\rangle -\langle [Y,Z],X\rangle
	+\langle [Z,X],Y\rangle,
\end{align*}
we obtain connection coefficients
\begin{equation*}
\hat{\Gamma}_{\mu\nu\sigma}=\big \langle \nabla_{\hat{e}_\mu}\hat{e}_\nu,\hat{e}_\sigma  \big\rangle
=\frac{1}{2}\left(\hat{C}_{\mu\nu\sigma}-\hat{C}_{\nu\sigma\mu}+\hat{C}_{\sigma\mu\nu}\right).
\end{equation*}

The spin coefficients of the null tetrad \eqref{eq:NPbasis} are twelve complex-valued functions, which are given as follows \cite{O}.
\begin{align*}
    \boldsymbol{\kappa} &=-\langle \nabla_{\boldsymbol{k}} \boldsymbol{k},\boldsymbol{m}\rangle=-\hat{\Gamma}_{112}, \\
    \boldsymbol{\rho} &=-\langle \nabla_{\bar{\boldsymbol{m}}}\boldsymbol{k},\boldsymbol{m}\rangle=-\hat{\Gamma}_{312}, \\
    \boldsymbol{\sigma} &=-\langle \nabla_{\boldsymbol{m}} \boldsymbol{k},\boldsymbol{m}\rangle=-\hat{\Gamma}_{212}, \\
    \boldsymbol{\tau} &=-\langle \nabla_{\boldsymbol{l}} \boldsymbol{k},\boldsymbol{m}\rangle=-\hat{\Gamma}_{012},\\
    \boldsymbol{\nu} &=\langle \nabla_{\boldsymbol{l}} \boldsymbol{l},\bar{\boldsymbol{m}}\rangle=\hat{\Gamma}_{003}, \\
    \boldsymbol{\mu} &=\langle\nabla_{\boldsymbol{m}} \boldsymbol{l},\bar{\boldsymbol{m}}\rangle=\hat{\Gamma}_{203}, \\
    \boldsymbol{\lambda} &=\langle\nabla_{\bar{\boldsymbol{m}}}\boldsymbol{l},\bar{\boldsymbol{m}}\rangle=\hat{\Gamma}_{303}, \\
    \boldsymbol{\pi} &=\langle\nabla_{\boldsymbol{k}} \boldsymbol{l},\bar{\boldsymbol{m}}\rangle=\hat{\Gamma}_{103},\\
    \boldsymbol{\varepsilon} &=\frac{1}{2}
    \Big(-\langle \nabla_{\boldsymbol{k}}\boldsymbol{k},\boldsymbol{l}\rangle+\langle \nabla_{\boldsymbol{k}} \boldsymbol{m},\bar{\boldsymbol{m}}\rangle\Big)
=\frac{1}{2}\Big(-\hat{\Gamma}_{110}+\hat{\Gamma}_{123}\Big), \\
   \boldsymbol{\gamma} &=\frac{1}{2}\Big(\langle\nabla_{\boldsymbol{l}} \boldsymbol{l},\boldsymbol{k}\rangle-\langle\nabla_{\boldsymbol{l}}\bar{\boldsymbol{m}},\boldsymbol{m}\rangle \Big)=\frac{1}{2}\left(\hat{\Gamma}_{001}-\hat{\Gamma}_{032}\right), \\
   \boldsymbol{\beta} &=\frac{1}{2}
    \Big(-\langle \nabla_{\boldsymbol{m}}\boldsymbol{k},\boldsymbol{l}\rangle+\langle\nabla_{\boldsymbol{m}} \boldsymbol{m},\bar{\boldsymbol{m}}\rangle\Big)
=\frac{1}{2}
\left(-\hat{\Gamma}_{210}+\hat{\Gamma}_{223}\right), \\
    \boldsymbol{\alpha} &=\frac{1}{2}\Big(\langle \nabla_{\bar{\boldsymbol{m}}}\boldsymbol{l},\boldsymbol{k}\rangle-\langle \nabla_{\bar{\boldsymbol{m}}}\bar{\boldsymbol{m}},\boldsymbol{m}\rangle\Big)=\frac{1}{2}\left(\hat{\Gamma}_{301}-\hat{\Gamma}_{332}\right).
\end{align*}

For Bondi-Sachs metric \eqref{eq:BondiSachsMetric}, we have
\begin{align*}
    \boldsymbol{\kappa} &=0,\\
    \boldsymbol{\rho}   &=-\frac{1}{r},\\
    \boldsymbol{\sigma} &=-\gamma_r-\tanh(2\delta)\delta_r+\mathrm{i}\bigg(\sinh(2\delta)\gamma_r-\operatorname{sech}(2\delta)\delta_r\bigg),\\
    \boldsymbol{\tau}   &=\frac{1}{2\sqrt{2}r\sqrt{\cosh(2\delta)}}
    \bigg( -2 \mathrm{e}^{-\gamma}\beta_\theta +\mathrm{e}^{-2\beta +\gamma}r^2 U_r\cosh(2\delta)\\
    &\quad +\mathrm{e}^{-2\beta -\gamma}r^2 W_r\sinh(2\delta)+\mathrm{i}\Big( -2\mathrm{e}^\gamma\beta_\phi\cosh(2\delta)\csc\theta\\
    &\quad +2\mathrm{e}^{-\gamma}\beta_\theta\sinh(2\delta)+\mathrm{e}^{-2\beta-\gamma}r^2 W_r\Big) \bigg),  \\
    \boldsymbol{\nu} &=\frac{1}{2\sqrt{2} r^2\sqrt{\cosh(2\delta)} }\bigg(\mathrm{e} ^{-2\beta-\gamma} V_\theta
      +\mathrm{i}\Big(\mathrm{e}^{-2\beta-\gamma}V_\theta\sinh(2\delta)\\
    &\quad -\mathrm{e}^{-2\beta+\gamma}V_\phi \csc\theta \cosh(2\delta) \Big)\bigg),  \\ 
    \boldsymbol{\mu} &=\frac{\mathrm{e}^{-2\beta}}{2}\bigg(U\cot\theta+ U_\theta+W_\phi\csc\theta -\frac{V}{r^2} \bigg),  \\
    \boldsymbol{\lambda} &=\frac{\mathrm{e}^{-2\beta}}{8}
    \bigg(8  \mathrm{e}^{-2\gamma} \tanh(2\delta) (W_\theta - W\cot\theta)\\
    &\quad -\frac{4V}{r}(\delta_r\tanh(2\delta)+\gamma_r) \\
    &\quad  -4 U(\cot\theta-2\delta_\theta \tanh(2\delta)-2\gamma_\theta)\\
    &\quad  +8 W\csc\theta ( \delta_\phi \tanh(2\delta)+\gamma_\phi) \\
	&\quad  -4 (W_\phi\csc\theta-U_\theta-2\delta_u \tanh(2\delta)-2\gamma_u ) \\
    &\quad  -4\mathrm{i} \Big( \cosh(2\delta)  \mathrm{e}^{2\gamma}U_\phi \csc\theta \\
    &\quad  -\left(2\operatorname{sech}(2\delta)-\cosh(2\delta)\right)\mathrm{e}^{-2\gamma}( W\cot\theta-W_\theta) \\
    &\quad  +\operatorname{sech}(2\delta) \Big(2U \delta _\theta +2W \delta _\phi \csc\theta  -\frac{V}{r} \delta_r +2\delta _u \Big)\\
    &\quad  +\sinh(2\delta) \cdot\Big(U \cot\theta-2 U \gamma_\theta -U_\theta \\
    &\quad -2W \gamma _\phi \csc\theta +W _\phi \csc\theta +\frac{V}{r}\gamma _r -2\gamma_u\Big)\Big)\bigg),\\      
    \boldsymbol{\pi} &= \frac{\mathrm{e}^{-2\beta-\gamma}}{2\sqrt{2}r \sqrt{\cosh(2\delta)}}
    \bigg( 2\mathrm{e}^{2\beta}\beta_\theta+\mathrm{e}^{2\gamma}r^2 U_r \cosh(2\delta) \\
    &\quad +r^2 W_r\sinh(2\delta) -\mathrm{i}\Big( 2\mathrm{e}^{2\gamma+2\beta}\beta_\phi\csc\theta \cosh(2\delta)\\
    &\quad  -2\mathrm{e}^{2\beta}\beta_\theta \sinh(2\delta)+r^2 W_r\Big)\bigg),\\
    \boldsymbol{\varepsilon} &= \beta_r+\frac{\mathrm{i}\operatorname{sech}(2\delta)}{4}\bigg( \sinh(4\delta)\gamma_r-2\delta_r \bigg),  \\
    \boldsymbol{\gamma} &= \frac{\mathrm{e}^{-2\beta}}{8}\bigg( -\frac{2V}{r^2}+\frac{2V_r}{r}  \\
    &\quad -2\mathrm{i} \Big( \cosh(2\delta)  \Big(  \mathrm{e}^{2\gamma} U_\phi \csc\theta  + \mathrm{e}^{-2\gamma}(W\cot\theta-W_\theta)\Big)\\
    &\quad +\operatorname{sech}(2\delta) \Big(2U \delta _\theta +2W \delta _\phi \csc\theta   -\frac{V}{r} \delta_r +2\delta _u \Big)\\
    &\quad +\sinh(2\delta) \cdot\Big(U \cot\theta-2 U \gamma_\theta -U_\theta \\
    &\quad -2W \gamma _\phi \csc\theta +W _\phi \csc\theta +\frac{V}{r}\gamma _r -2\gamma_u\Big)\Big)\bigg),\\
    \boldsymbol{\beta} &= \frac{\mathrm{e}^{-\gamma}}{4\sqrt{2}r \sqrt{\cosh(2\delta)}}\bigg( 2\beta_\theta
	-2\gamma_\theta +2\cot\theta -2\delta_\theta \tanh(2\delta)\\
    &\quad +r^2 \mathrm{e}^{2\gamma-2\beta}U_r \cosh(2\delta)+\mathrm{e}^{-2\beta}r^2 W_r\sinh(2\delta)\\
    &\quad +2\mathrm{i}\bigg( \cosh(2\delta) \Big(\mathrm{e}^{2\gamma}\beta_\phi\csc\theta +\mathrm{e}^{2\gamma}\gamma _\phi\csc\theta -2\delta _\theta \Big)\\
    &\quad +\frac{\mathrm{e}^{-2\beta}}{2}r^2 W_r +\delta_\theta \tanh(2 \delta) \sinh(2\delta)\\
    &\quad +\sinh(2\delta)\Big( \mathrm{e}^{2\gamma}\delta_\phi\csc\theta -\beta_\theta +\gamma_\theta -\cot\theta \Big)\bigg) \bigg),\\
    \boldsymbol{\alpha} &= \frac{\mathrm{e}^{-\gamma}}{4\sqrt{2}r \sqrt{\cosh(2\delta)}}\bigg( 2\beta_\theta
	+2\gamma_\theta -2\cot\theta +2\delta_\theta \tanh(2\delta)\\
    &\quad +r^2 \mathrm{e}^{2\gamma-2\beta}U_r \cosh(2\delta)+\mathrm{e}^{-2\beta}r^2 W_r\sinh(2\delta)\\
    &\quad +2\mathrm{i}\bigg( \cosh(2\delta) \Big(-\mathrm{e}^{2\gamma}\beta_\phi\csc\theta +\mathrm{e}^{2\gamma}\gamma _\phi\csc\theta -2\delta _\theta \Big)\\
    &\quad -\frac{\mathrm{e}^{-2\beta}}{2}r^2 W_r +\delta_\theta \tanh(2 \delta) \sinh(2\delta)\\
    &\quad +\sinh(2\delta)\Big( \mathrm{e}^{2\gamma}\delta_\phi\csc\theta +\beta_\theta +\gamma_\theta -\cot\theta \Big)\bigg) \bigg).
\end{align*}
It holds that
\begin{equation*}
    \bar{\boldsymbol{\mu}}=\boldsymbol{\mu}, \quad \boldsymbol{\alpha}+\bar{\boldsymbol{\beta}}=\boldsymbol{\pi}.
\end{equation*}

Denote by $\mathcal{W}$ the $(0,4)$ type Weyl tensor. The Weyl scalars are \cite{O}
\begin{align*}
    \Phi _0 =\mathcal{W}_{\boldsymbol{k}\boldsymbol{m}\boldsymbol{k}\boldsymbol{m}}, \,\,
    \Phi_1 =\mathcal{W}_{\boldsymbol{k}\boldsymbol{l}\boldsymbol{k}\boldsymbol{m}}, \,\,
    \Phi_2 =\mathcal{W}_{\boldsymbol{k}\boldsymbol{m}\bar{\boldsymbol{m}}\boldsymbol{l}}, \,\,
    \Phi_3 =\mathcal{W}_{\boldsymbol{l}\boldsymbol{k}\boldsymbol{l}\bar{\boldsymbol{m}}}, \,\,
    \Phi_4 =\mathcal{W}_{\boldsymbol{l}\bar{\boldsymbol{m}}\boldsymbol{l}\bar{\boldsymbol{m}}}.
\end{align*}

Now we prove the peeling property.
\begin{theorem}
Suppose that the Einstein scalar field equations for the nonero cosmological constant \eqref{eq:EQ1}-\eqref{eq:EQ7} hold for Bondi-Sachs metric \eqref{eq:BondiSachsMetric}, and $c$, $d$, $\Psi$ satisfy \eqref{eq:gammaExpand}, \eqref{eq:deltaExpand}, \eqref{eq:PsiOriginal}. If $\tilde{I}_u=0$, then the peeling property holds
\begin{align*}
\Phi_k=\frac{f_k(u,\theta, \phi)}{r^{5-k}}+O\left(\frac{1}{r^{6-k}}\right),\quad k=0,\dots,4.
\end{align*}
\end{theorem}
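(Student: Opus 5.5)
We will compute the five Weyl scalars directly from the Newman--Penrose (Ricci) identities rather than from the full Riemann tensor. We work in the coordinates provided by the previous theorem (so $B=0$) and substitute the expansions \eqref{eq:betaExpand}--\eqref{eq:PsiExpand} with $\tilde I_u=0$. Under this assumption $\Psi=\Lambda\tilde I(u)+L/r^3+O(r^{-4})$, and since $\tilde I_u=0$ the first two terms of $\Psi$ are constant. Hence $\nabla\Psi=O(r^{-4})$ in coordinate components, and the tetrad components $\Psi_{;\hat a}$ decay at least like $r^{-4}$ for $\boldsymbol{k}$ and $\boldsymbol{l}$, and like $r^{-4}$ for $\boldsymbol{m}$ (because $\boldsymbol m$ carries a factor $1/r$ and $\partial_\theta\Psi=O(r^{-3})$).

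\textbf{Ricci part.} By \eqref{esf} together with the cosmological term, $R_{\mu\nu}=\Lambda\mathbf g_{\mu\nu}+\nabla_\mu\Psi\nabla_\nu\Psi$. The trace-free Ricci scalars $\Phi_{ab}$ of NP therefore equal $\tfrac12\Psi_{;a}\Psi_{;b}$ contracted with the tetrad, up to trace corrections. These are $O(r^{-8})$, negligible at every order we need. The curvature-scalar term is $\Lambda/24$ plus $O(r^{-8})$, which is a constant.

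\textbf{Weyl scalars.} We would use the standard NP expressions, in which each Weyl scalar enters one radial Ricci identity linearly:
\begin{align*}
\Phi_0&=D\boldsymbol\sigma-(\boldsymbol\rho+\bar{\boldsymbol\rho})\boldsymbol\sigma-(3\boldsymbol\varepsilon-\bar{\boldsymbol\varepsilon})\boldsymbol\sigma+\dots,\\
\Phi_1&=D\boldsymbol\beta-\bar\delta\boldsymbol\sigma+\dots,\qquad
\Phi_2=\bar\delta\boldsymbol\tau-\Delta\boldsymbol\rho+\dots,\\
\Phi_3&=\bar\delta\boldsymbol\gamma-\Delta\boldsymbol\alpha+\dots,\qquad
\Phi_4=\delta\boldsymbol\nu-\Delta\boldsymbol\lambda+\dots,
\end{align*}
where $D=\boldsymbol k=\partial_r$, $\Delta=\boldsymbol l$, and $\delta=\boldsymbol m$. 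The spin coefficients listed above are inserted, and each is expanded in $1/r$ using the asymptotics. For example, $\boldsymbol\sigma=c/r^2+\dots$ has no $r^{-3}$ term because $\gamma$ and $\delta$ lack an $r^{-2}$ term. The dangerous contributions are the leading terms that cancel in the flat case. These are:
\begin{itemize}
\item the $r^{-4}$ piece of $\Phi_0$, which involves the $\tilde I_u^2$ term in $\beta$ together with $r^{-3}$ coefficients;
\item the $O(1)$ and $O(r)$ terms of $\boldsymbol\lambda$ and $\boldsymbol\mu$ coming from $V\sim-\Lambda r^3/3$, from $U\sim\Lambda\tilde X\sin\theta$, and from $\gamma_u$.
\end{itemize}

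\textbf{Main obstacle.} The key step is $\Phi_3$ and $\Phi_4$. There $\Delta$ contains $-\tfrac{V}{2r}\partial_r\sim\tfrac{\Lambda r^2}{6}\partial_r$, which raises the order of every term by $r$. The nonzero $X$ and $Y$ boundary terms also feed angular derivatives at order $r^0$. The cancellation of these growing terms must come from the hypersurface/evolution relations already used: \eqref{eq:c-dRestrictTilde}, i.e.\ $2\Lambda c+3X\cot\theta-3X_\theta+3Y_\phi\csc\theta=0$ and its $d$-analogue, together with the higher coefficients in Appendix B.

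We would first verify the analogous vacuum computation in \cite{XZ} and then track the only new terms, namely those involving $\tilde I_u$ (which now vanish) and $L$ (which enters only through $\Phi_{ab}=O(r^{-8})$ and via $\beta,U,W,V$ at orders beyond those needed). In fact, with $\tilde I_u=0$ the expansions of $\beta,U,W,V,\gamma,\delta$ coincide through the relevant orders with the vacuum ones of \cite{XZ}, since $\Psi$ influences them only quadratically through $\nabla\Psi$. Thus the peeling result reduces to that case, and $\Phi_k=f_k r^{k-5}+O(r^{k-6})$ follows.
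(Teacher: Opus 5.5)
Your strategy is essentially the paper's. Both express $\Phi_0,\dots,\Phi_4$ through the NP Ricci identities, substitute \eqref{eq:betaExpand}--\eqref{eq:PsiExpand}, and repeat the vacuum computation of \cite{XZ}, with \eqref{eq:c-dRestrictTilde} cancelling the growing terms.

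Your shortcut of setting $\tilde I_u=0$ first and reducing to the vacuum expansions is legitimate. In Appendix B every occurrence of $L$ carries a factor $\tilde I_u$, and $\mathscr R_1,\dots,\mathscr R_6$ contain only $\Psi_r,\Psi_\theta,\Psi_\phi$. The paper instead keeps $\tilde I_u$ and obtains $\Phi_k=O(r^{k-5})$ for $k\neq2$ together with $\Phi_2=\frac{\Lambda\tilde I_u^2}{2r^2}+O(r^{-3})$. That also shows the hypothesis is needed only for $\Phi_2$; your route gives sufficiency only.

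One step is false as written, precisely because of the effect you flag later.

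\textbf{The estimate for $\boldsymbol l(\Psi)$.} $\boldsymbol l$ contains $-\frac{V}{2r}\partial_r\approx\frac{\Lambda r^2}{6}\partial_r$. With $\Psi=\Lambda\tilde I+Lr^{-3}+\cdots$ this gives $\boldsymbol l(\Psi)=-\frac{\Lambda L}{2r^2}+O(r^{-3})$, not $O(r^{-4})$. Nothing forces $L=0$: the leading $V$-terms of $\mathscr R_7$ cancel exactly at order $r^{-3}$. Also $\Psi_u,\Psi_\theta,\Psi_\phi$ are $O(r^{-3})$, not $O(r^{-4})$.

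\textbf{Consequences for the Ricci terms.} $\Phi_{22}=\frac12\boldsymbol l(\Psi)^2$ is only $O(r^{-4})$. $\Phi_{11}$, $\Phi_{21}$ and the $\Psi$-part of $R/24$ are $O(r^{-6})$, not $O(r^{-8})$. In addition, $R/24=\Lambda/6+O(r^{-6})$, not $\Lambda/24$.

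\textbf{Why the conclusion still holds.} It depends on which Ricci components actually enter. $\Phi_{22}$ appears in no NP identity containing a Weyl scalar. In the identities the paper uses, the scalar field enters only through $\frac16\boldsymbol l(\Psi)\boldsymbol k(\Psi)-\frac16\boldsymbol m(\Psi)\bar{\boldsymbol m}(\Psi)$ in $\Phi_2$ and through $-\frac12\boldsymbol l(\Psi)\bar{\boldsymbol m}(\Psi)$ in $\Phi_3$. Both are $O(r^{-6})$, well below the orders $r^{-3}$ and $r^{-2}$ that are needed. You should make the argument this way rather than through a uniform $O(r^{-8})$ bound.

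\textbf{Misquoted identities.} The radial identity for $\Phi_1$ is $\boldsymbol k(\boldsymbol\beta)-\boldsymbol m(\boldsymbol\varepsilon)+\cdots$, not one with $\bar\delta\boldsymbol\sigma$. The identity for $\Phi_4$ contains $\bar{\boldsymbol m}(\boldsymbol\nu)$, not $\boldsymbol m(\boldsymbol\nu)$. Your $\Phi_2$ and $\Phi_3$ formulas are $\boldsymbol l$-equations, not radial ones. The paper uses $\boldsymbol k(\boldsymbol\mu)-\boldsymbol m(\boldsymbol\pi)$ and $\boldsymbol k(\boldsymbol\nu)-\boldsymbol l(\boldsymbol\pi)$ instead. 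These avoid applying $\boldsymbol l$, with its $r^2\partial_r$ part, to $\boldsymbol\rho$ and $\boldsymbol\alpha$.
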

\begin{proof}
    It is straightforward that
\begin{align*}
    \Phi_0 &= \boldsymbol{k}(\boldsymbol{\sigma})
    -\boldsymbol{\sigma}(2\boldsymbol{\rho}+3\boldsymbol{\varepsilon}-\bar{\boldsymbol{\varepsilon}}),  \\
    \Phi_1 &= \boldsymbol{k}(\boldsymbol{\beta})-\boldsymbol{m}(\boldsymbol{\varepsilon})
    -(\boldsymbol{\alpha}+\boldsymbol{\pi})\boldsymbol{\sigma}
    -(\boldsymbol{\rho}+\boldsymbol{\varepsilon}-\bar{\boldsymbol{\varepsilon}})\boldsymbol{\beta}, \\
    \Phi_2 &= \boldsymbol{k}(\boldsymbol{\mu})-\boldsymbol{m}(\boldsymbol{\pi})-\frac{\Lambda}{3}
    +\frac{\boldsymbol{l}(\Psi)\boldsymbol{k}(\Psi)}{6}
    -\frac{\boldsymbol{m}(\Psi)\bar{\boldsymbol{m}}(\Psi)}{6}-\boldsymbol{\rho}\boldsymbol{\mu}\\
    &\quad -\boldsymbol{\sigma}\boldsymbol{\lambda}-2\boldsymbol{\beta}\boldsymbol{\pi}
    +(\boldsymbol{\varepsilon}+\bar{\boldsymbol{\varepsilon}})\boldsymbol{\mu},  \\
    \Phi_3 &=\boldsymbol{k}(\boldsymbol{\nu})-\boldsymbol{l}(\boldsymbol{\pi})
    -\boldsymbol{\mu}(\boldsymbol{\pi}+\bar{\boldsymbol{\tau}})
    -\boldsymbol{\lambda}(\bar{\boldsymbol{\pi}}+\boldsymbol{\tau})
    -\boldsymbol{\pi}(\boldsymbol{\gamma}-\bar{\boldsymbol{\gamma}}) \\
    &\quad +\boldsymbol{\nu}(3\boldsymbol{\varepsilon}+\bar{\boldsymbol{\varepsilon}})
     -\frac{1}{2}\boldsymbol{l}(\Psi)\bar{\boldsymbol{m}}(\Psi),\\
    \Phi_4 &= -\boldsymbol{l}(\boldsymbol{\lambda})+\bar{\boldsymbol{m}}(\boldsymbol{\nu})
         -(2\boldsymbol{\mu}+3\boldsymbol{\gamma}-\bar{\boldsymbol{\gamma}})\boldsymbol{\lambda}
         +(2\boldsymbol{\alpha}+2\boldsymbol{\pi}-\bar{\boldsymbol{\tau}})\boldsymbol{\nu}.
\end{align*}
Same as that in \cite{XZ}, using the asymptotic expansions \eqref{eq:betaExpand}-\eqref{eq:PsiExpand}, in particular, using \eqref{eq:c-dRestrictTilde}, we obtain
\begin{align*}
\Phi_k=O\left(\frac{1}{r^{5-k}}\right)\,\,(k \neq 2), \quad \Phi_2 =\frac{\Lambda \tilde{I}_u^2}{2r^2} +O\left(\frac{1}{r^{3}}\right).
\end{align*}
Therefore, if $\tilde{I}_u=0$,  the theorem follows. 
\end{proof}

\mysection{The Bondi Energy-Momentum}
\ls

\allowdisplaybreaks

In this section we study the Bondi energy-momentum for the Einstein scalar field equations with the nonzero cosmological constant. 

Under the asymptotic expansions \eqref{eq:betaExpand}-\eqref{eq:PsiExpand}, we have
\begin{align*}
 \Omega_{00} &= \overset{-1}{\Omega} _{00} r+\overset{0}{\Omega}_{00}+\frac{\overset{1}{\Omega}_{00}}{r}  +\frac{\overset{2}{\Omega}_{00}}{r^2}+O\left(\frac{1}{r^3}\right),\\
    \overset{2}{\Omega}_{00}
    &=-2c_u^2-2d_u^2-9\tilde{I}_{uu}^2-2\mathcal{M}_u \\
    &\quad +\Lambda\bigg(-6\mathcal{M}\tilde{X}\cos\theta-4\mathcal{M}_\phi\tilde{Y}
-3\mathcal{M}\tilde{Y}_\phi \\
    &\quad -4\mathcal{M}_\theta\tilde{X}\sin\theta
    -3\mathcal{M}\tilde{X}_\theta \sin\theta+ \tensor[_{\Lambda}]{\overset{2}{\Omega}}{_{00}}\bigg) \\
    &\quad
    +\Lambda^2\bigg(2\mathcal{M} c\tilde{X}^2\sin^2\theta+ 4\mathcal{M}d\tilde{X}\tilde{Y}\sin^2\theta \\
    &\quad -2\mathcal{M}c\tilde{Y}^2\sin^2\theta + \tensor[_{\Lambda^2}]{\overset{2}{\Omega}}{_{00}}\bigg)
+\Lambda^3 \tensor[_{\Lambda^3}]{\overset{2}{\Omega}}{_{00}} ,
\end{align*}
where $\tensor[_{\Lambda}]{\overset{2}{\Omega}}{_{00}}$, $\tensor[_{\Lambda^2}]{\overset{2}{\Omega}}{_{00}}$, $\tensor[_{\Lambda^3}]{\overset{2}{\Omega}}{_{00}}$ are given in the Appendix C, which depend on $c,d,C,D,\tilde{X},\tilde{Y},\tilde{I},\overset{4}{\gamma},\overset{4}{\delta},N,P$ and their derivatives, but not on $\mathcal{M}$ or its derivatives. Denote
\begin{equation*}
    n^0=1,\quad n^1=\sin\theta\cos\phi,\quad n^2=\sin\theta\sin\phi,\quad n^3=\cos\theta.
\end{equation*}
The Bondi energy-momentum of each null hypersurface $u=constant$ are defined as
\begin{align*}
    m_\nu(u)=\frac{1}{4\pi}\int_{S^2}\mathcal{M}(u,\theta,\phi)n^\nu\mathrm{d}S, \quad \nu=0, 1, 2, 3.
\end{align*}

Comparing the $r^{-2}$-coefficient in $\Omega_{00}=0$, we obtain
\begin{equation}\label{eq:R00eqExpand}
\begin{aligned}
    \mathcal{M}_u &= -c_u^2-d_u^2-\frac{9}{2} \tilde{I} _{uu} ^2 \\
                  &\quad +\frac{1}{2}\bigg(-6\mathcal{M}\tilde{X}\cos\theta-4\mathcal{M}_\phi\tilde{Y}-3\mathcal{M}\tilde{Y}_\phi \\
                  &\quad -4\mathcal{M}_\theta\tilde{X}\sin\theta
                     -3\mathcal{M}\tilde{X}_\theta \sin\theta+ \tensor[_{\Lambda}]{\overset{2}{\Omega}}{_{00}}\bigg) \Lambda \\
                  &\quad +\frac{1}{2}\bigg(2\mathcal{M} c\tilde{X}^2\sin^2\theta +4\mathcal{M}d\tilde{X}\tilde{Y}\sin^2\theta \\
                  &\quad -2\mathcal{M}c\tilde{Y}^2\sin^2\theta + \tensor[_{\Lambda^2}]{\overset{2}{\Omega}}{_{00}}\bigg)\Lambda^2
                     +\frac{1}{2} \tensor[_{\Lambda^3}]{\overset{2}{\Omega}}{_{00}}\Lambda^3.
\end{aligned}
\end{equation}
Therefore
\begin{align*}
    \frac{\mathrm{d}m_\nu}{\mathrm{d}u} =& -\frac{1}{8\uppi}\int_{S^2}\left(2c_u^2+2d_u^2+9 \tilde{I}_{uu} ^2\right) n^\nu \mathrm{d}S \\
                   & +\frac{\Lambda}{8\uppi}\int_{S^2} \Bigg[\bigg(-6\mathcal{M}\tilde{X}\cos\theta-4\mathcal{M}_\phi\tilde{Y}
-3\mathcal{M}\tilde{Y}_\phi \\
    & -4\mathcal{M}_\theta\tilde{X}\sin\theta
      -3\mathcal{M}\tilde{X}_\theta \sin\theta + \tensor[_{\Lambda}]{\overset{2}{\Omega}}{_{00}}\bigg) \Bigg] n^\nu\mathrm{d}S\\
    & +\frac{\Lambda^2}{8\uppi}\int_{S^2} \bigg(2\mathcal{M} c\tilde{X}^2\sin^2\theta + 4\mathcal{M}d\tilde{X}\tilde{Y}\sin^2\theta \\
    & -2\mathcal{M}c\tilde{Y}^2\sin^2\theta + \tensor[_{\Lambda^2}]{\overset{2}{\Omega}}{_{00}}\bigg) n^\nu\mathrm{d}S\\
    & +\frac{\Lambda^3}{8\uppi}\int_{S^2} \tensor[_{\Lambda^3}]{\overset{2}{\Omega}}{_{00}} n^\nu\mathrm{d}S.
\end{align*}

Note that $\mathcal{M}$ is defined on $S^2$, which takes the same value at $\phi=0$ and $\phi =2\pi$, and also at $\theta=0$ and $\theta=\pi$. We can simply the second integral for $\nu =0$.
Clearly, 
\begin{align*}
    \int_{S^2} \mathcal{M}_\phi \tilde{Y}\mathrm{d}S &= \int_0 ^\pi \int_0 ^{2\pi} \mathcal{M}_\phi \tilde{Y} \sin\theta \mathrm{d}\phi \mathrm{d}\theta \\
                                                 &=-\int_0 ^\pi \int_0 ^{2\pi} \mathcal{M} \tilde{Y} _\phi \sin\theta \mathrm{d}\phi \mathrm{d}\theta \\
                                                 &=-\int_{S^2} \mathcal{M} \tilde{Y}_\phi \mathrm{d}S,\\
    \int_{S^2} \mathcal{M}_\theta \tilde{X}\sin\theta \mathrm{d}S &= \int_0^{2\pi}\int_0^\pi  \mathcal{M}_\theta \tilde{X} \sin^2 \theta \mathrm{d}\theta \mathrm{d}\phi  \\
                                                &=-\int_0^{2\pi}\int_0^\pi \bigg(\mathcal{M}\tilde{X}_\theta \sin^2\theta +2\mathcal{M}\tilde{X} \sin \theta \cos\theta\bigg)\mathrm{d}\theta \mathrm{d}\phi \\
                                                &=-\int_{S^2} \bigg(\mathcal{M}\tilde{X}_\theta \sin \theta +2\mathcal{M}\tilde{X} \cos\theta\bigg)\mathrm{d}S.
\end{align*}
We obtain
\begin{align*}
    \frac{\mathrm{d}m_0}{\mathrm{d}u} &= -\frac{1}{8\uppi}\int_{S^2}\left(2c_u^2+2d_u^2+9 \tilde{I}_{uu} ^2\right) \mathrm{d}S \\
                   &\quad +\frac{\Lambda}{8\uppi}\int_{S^2} \bigg( \mathcal{M}\tilde{X}_\theta\sin\theta
    +2\mathcal{M}\tilde{X}\cos\theta +\mathcal{M}\tilde{Y}_\phi
                   + \tensor[_{\Lambda}]{\overset{2}{\Omega}}{_{00}}\bigg)\mathrm{d}S\\
    &\quad +\frac{\Lambda^2}{8\uppi}\int_{S^2} \bigg(2\mathcal{M} c\tilde{X}^2\sin^2\theta + 4\mathcal{M}d\tilde{X}\tilde{Y}\sin^2\theta \\
    &\quad -2\mathcal{M}c\tilde{Y}^2\sin^2\theta + \tensor[_{\Lambda^2}]{\overset{2}{\Omega}}{_{00}}\bigg) \mathrm{d}S\\
    &\quad +\frac{\Lambda^3}{8\uppi}\int_{S^2} \tensor[_{\Lambda^3}]{\overset{2}{\Omega}}{_{00}} \mathrm{d}S.
\end{align*}

\begin{remark}
According to \cite{Planck2018}, $\Lambda $ is approximately equal to $1.1 \times 10 ^{-52} m^{-2}$, and the curvatures due to $\Lambda$ is approximately equal to $10^{-18} $ times the curvatures due to gravitational waves. Thus, if $2c_u^2+2d_u^2+9 \tilde{I}_{uu} ^2>0$, then $\int_{S^2}\big(2c_u^2+2d_u^2+9 \tilde{I}_{uu} ^2\big) \mathrm{d}S$ is likely to dominant the remaining terms. In this case it should still have
\begin{align*}
\frac{\mathrm{d}m_0}{\mathrm{d}u}<0.
\end{align*}
\end{remark}

\section*{Appendix A: Explicit Formulas for $\mathscr{R}_1,\ldots,\mathscr{R}_7$ in \eqref{eq:EQ1}-\eqref{eq:EQ7} }
\begin{align*}
    \mathscr{R}_1 &= \frac{1}{2}r\Big(\gamma_r^2\cosh^2(2\delta)+\delta_r^2\Big)+\frac{1}{4}r\Psi_r^2,  \\
    \mathscr{R}_2 &= 2r^2\Big(\beta_{r\theta}+2\delta_r\delta_\theta-2r^{-1}\beta_\theta
	            -4\gamma_r\delta_\theta\sinh(2\delta)\cosh(2\delta) \\
	           &\quad -\big(\gamma_{r\theta}-2\gamma_r\gamma_\theta+2\gamma_r\cot\theta\big)\cosh^2(2\delta) \Big)\\
	           &\quad +2r^2\mathrm{e}^{2\gamma}\csc\theta\Big(-\delta_{r\phi}
                   -2\delta_r\gamma_\phi+(\gamma_{r\phi}+2\gamma_r\gamma_\phi)\sinh(2\delta)\cosh(2\delta) \\
               &\quad +2\gamma_r\delta_\phi(1+2\sinh^2(2\delta))\Big)+2r^2\Psi_r\Psi_\theta,\\
    \mathscr{R}_3 &= 2r^2\mathrm{e}^{-2\gamma}\Big(-\delta_{r\theta}+2\delta_r\gamma_\theta-2\delta_r\cot\theta
               -(\gamma_{r\theta}-2\gamma_r\gamma_\theta\\
               &\quad +2\gamma_r\cot\theta)\cosh(2\delta)\sinh(2\delta)-2\gamma_r\delta_\theta(1+2\sinh^2(2\delta)) \Big) \\
               &\quad +2r^2\csc\theta\Big(\beta_{r\phi}+2\delta_r\delta_\phi-2r^{-1}\beta_\phi
	            +4\gamma_r\delta_\phi\cosh(2\delta)\sinh(2\delta) \\
               &\quad +(\gamma_{r\phi}+2\gamma_r\gamma_\phi)\cosh^2(2\delta) \Big)+2 r^2 \Psi_r\Psi_\phi\csc\theta,\\
    \mathscr{R}_4 &= 2\mathrm{e}^{2\beta}\csc\theta\Big((\beta_{\theta\phi}+\beta_\theta\beta_\phi+2\delta_\theta\delta_\phi)\sinh(2\delta) \\
	           &\quad +(\delta_{\theta\phi}+\delta_\phi\cot\theta+\delta_\theta\gamma_\phi-\gamma_\theta\delta_\phi
                +\delta_\theta\beta_\phi+\beta_\theta\delta_\phi)\cosh(2\delta) \Big) \\
               &\quad-\mathrm{e}^{2\beta-2\gamma}\Big( (\beta_{\theta\theta}+\beta_\theta^2+\beta_\theta\cot\theta+2\gamma_\theta^2+2\delta_\theta^2
                -1-\gamma_{\theta\theta}\\
               &\quad-3\gamma_\theta\cot\theta -2\beta_\theta\gamma_\theta)\cosh(2\delta)
                +(\delta_{\theta\theta}+3\delta_\theta\cot\theta+2\beta_\theta\delta_\theta\\
               &\quad-4\gamma_\theta\delta_\theta)\sinh(2\delta) \Big)
                -\mathrm{e}^{2\beta+2\gamma}\csc^2\theta\Big((\beta_{\phi\phi}+\beta_\phi^2+2\gamma_\phi^2+2\delta_\phi^2\\
               &\quad+\gamma_{\phi\phi}+2\beta_\phi\gamma_\phi)\cosh(2\delta)
                +(\delta_{\phi\phi}+2\beta_\phi\delta_\phi+4\gamma_\phi\delta_\phi)\sinh(2\delta) \Big) \\
	           &\quad-\frac{1}{4}r^4\mathrm{e}^{-2\beta}\Big( (\mathrm{e}^{2\gamma}U_r^2
                +\mathrm{e}^{-2\gamma}W_r^2)\cosh(2\delta)+2U_r W_r \sinh(2\delta) \Big) \\
	           &\quad +\frac{1}{2}r( rU_{r\theta}+rU_r\cot\theta+4U_\theta+4U\cot\theta)
                +\frac{1}{2}r\csc\theta (r W_{r\phi}+4W_\phi) \\
               &\quad -\frac{1}{2}\mathrm{e}^{2\beta}\Big((\mathrm{e}^{-2\gamma}\Psi_\theta^2+\mathrm{e}^{2\gamma}\Psi_\phi^2\csc^2\theta)\cosh(2\delta)
                -2\Psi_\theta\Psi_\phi \sinh(2\delta)\csc\theta\Big) \\
               &\quad -\Lambda \mathrm{e}^{2\beta}r^2,\\
    \mathscr{R}_5 &= \frac{1}{2}(\gamma_r V_r+\gamma_{rr}V+r^{-1}\gamma_r V)
	             \cosh(2\delta)+2\gamma_r\delta_r V\sinh(2\delta) \\
	            &\quad+\frac{1}{8}r^3\mathrm{e}^{-2\beta}(\mathrm{e}^{2\gamma}U_r^2-\mathrm{e}^{-2\gamma}W_r^2)
	             +\frac{1}{2}r^{-1}\mathrm{e}^{2\beta-2\gamma}(\beta_{\theta\theta}+\beta_\theta^2-\beta_\theta\cot\theta) \\
	            &\quad-\frac{1}{2}r^{-1}\mathrm{e}^{2\beta+2\gamma}(\beta_{\phi\phi}+\beta_\phi^2)\csc^2\theta
	             +r^{-1}\mathrm{e}^{2\beta}(\beta_\theta\delta_\phi-\beta_\phi\delta_\theta)\csc\theta \\
	            &\quad+\frac{1}{4}r\mathrm{e}^{2\gamma}\csc\theta
                  \Big( (U_{r\phi}+2r^{-1}U_\phi)\sinh(2\delta)+4\delta_r U_\phi\cosh(2\delta) \Big) \\
                &\quad-\frac{1}{4}r\mathrm{e}^{-2\gamma}\Big( (W_{r\theta}-W_r\cot\theta)\sinh(2\delta)
                 +2r^{-1}(W_\theta-W\cot\theta)\sinh(2\delta) \\
	            &\quad+4\delta_r(W_\theta-W\cot\theta)\cosh(2\delta) \Big)
	             -\frac{1}{4}r\big( U_{r\theta}+2r^{-1}U_\theta-U_r\cot\theta\\
                &\quad-2r^{-1}U\cot\theta+4r^{-1}\gamma_\theta U+4\gamma_{r\theta}U+2\gamma_\theta U_r+2\gamma_rU_\theta\\
                &\quad+2\gamma_r U\cot\theta \big)\cosh(2\delta)
	             -r(\delta_r U_\theta+2\gamma_r\delta_\theta U+2\delta_r\gamma_\theta U\\
                &\quad-\delta_r U\cot\theta)\sinh(2\delta)
	             +\frac{1}{4}r\csc\theta( W_{r\phi}+2r^{-1}W_\phi-4r^{-1}\gamma_\phi W\\
                &\quad-4\gamma_{r\phi}W-2\gamma_\phi W_r-2\gamma_r W_\phi)\cosh(2\delta)+r\csc\theta(\delta_r W_\phi
                 -2\delta_r\gamma_\phi W\\
                &\quad-2\gamma_r\delta_\phi W)\sinh(2\delta)
                 +\frac{1}{4r}\mathrm{e}^{2\beta}\left(\mathrm{e}^{-2\gamma}\Psi_\theta^2-\mathrm{e}^{2\gamma}\Psi_\phi^2\csc^2\theta \right),\\
    \mathscr{R}_6 &=\frac{1}{2}\Big(\delta_r V_r+\delta_{rr}V+r^{-1}\delta_r V-2\gamma_r^2 V\cosh(2\delta)\sinh(2\delta)\Big)
	             -\frac{1}{2r}\mathrm{e}^{2\beta-2\gamma}\\
                &\quad \cdot(\beta_{\theta\theta}+\beta_\theta^2-\beta_\theta\cot\theta)\sinh(2\delta)
                 -\frac{1}{2r}\mathrm{e}^{2\beta+2\gamma}\csc^2\theta(\beta_{\phi\phi}+\beta_\phi^2)\sinh(2\delta) \\
	            &\quad -\frac{1}{r}\mathrm{e}^{2\beta}\csc\theta\big( -\beta_{\theta\phi}-\beta_\theta\beta_\phi+\beta_\phi\cot\theta
                 +\beta_\theta\gamma_\phi-\gamma_\theta\beta_\phi \big)\cosh(2\delta) \\
	            &\quad+\frac{1}{8}r^3\mathrm{e}^{-2\beta}\Big((\mathrm{e}^{2\gamma}U_r^2+\mathrm{e}^{-2\gamma}W_r^2)\sinh(2\delta)
                 +2U_r W_r\cosh(2\delta) \Big) \\
                &\quad-\frac{1}{2}r\Big( 2\delta_{r\theta}U+\frac{2}{r}\delta_\theta U+\delta_r U_\theta+\delta_\theta U_r+\delta_r U\cot\theta
                 -2(\gamma_r U_\theta\\
                &\quad-\gamma_r U\cot\theta+2\gamma_r\gamma_\theta U)\cosh(2\delta)\sinh(2\delta) \Big)
                 -\frac{1}{2}r\csc\theta\Big( 2\delta_{r\phi}W\\
                &\quad+\frac{2}{r}\delta_\phi W +\delta_r W_\phi+\delta_\phi W_r +2(\gamma_r W_\phi-2\gamma_r\gamma_\phi W)\cosh(2\delta)\sinh(2\delta)\Big) \\
                &\quad -\frac{1}{4}r\mathrm{e}^{-2\gamma}\Big( W_{r\theta}-W_r\cot\theta+ \frac{2}{r}(W_\theta-W\cot\theta)-4\gamma_r(W_\theta-W\cot\theta)\\
                &\quad\cdot\cosh^2(2\delta)\Big)-\frac{1}{4}r\mathrm{e}^{2\gamma}\csc\theta
	             \Big(U_{r\phi}+\frac{2}{r}U_\phi+4\gamma_rU_\phi\cosh^2(2\delta) \Big) \\
                &\quad-\frac{\mathrm{e}^{2\beta}}{4r}\Big((\mathrm{e}^{-2\gamma}\Psi_\theta^2+\mathrm{e}^{2\gamma}\Psi_\phi^2\csc^2\theta)\sinh(2\delta)
                 -2\Psi_\theta\Psi_\phi\cosh(2\delta)\csc\theta \Big),\\
\mathscr{R}_7 &=\frac{1}{2}V\Psi_{rr}-rU\Psi_{r\theta}-rW\csc\theta\Psi_{r\phi}
                 +\frac{1}{2r}\mathrm{e}^{2\beta-2\gamma}\cosh(2\delta)\Psi_{\theta\theta} \\
                &\quad-\frac{1}{r}\mathrm{e}^{2\beta}\Psi_{\theta\phi}\csc\theta\sinh(2\delta)
                 +\frac{1}{2r}\mathrm{e}^{2\beta+2\gamma}\Psi_{\phi\phi}\cosh(2\delta)\csc^2\theta \\
	            &\quad+\frac{r}{2}\Big(\frac{V}{r^2}+\frac{V_r}{r}-U\cot\theta-U_\theta-W_\phi\csc\theta\Big)\Psi_r
                 +\bigg( \frac{1}{2r}\mathrm{e}^{2\beta-2\gamma}\\
                &\quad\cdot\Big(\cosh(2\delta)\cot\theta
                 +2\cosh(2\delta)\beta_\theta+2\sinh(2\delta)\delta_\theta-2\cosh(2\delta)\gamma_\theta\Big)\\
                &\quad-\frac{1}{2}(2U+rU_r)
                 -\frac{1}{r}\mathrm{e}^{2\beta}\cosh(2\delta)\csc\theta\delta_\phi 
                 -\frac{1}{r}\mathrm{e}^{2\beta}\sinh(2\delta)\csc\theta\beta_\phi \bigg)\Psi_\theta\\
                &\quad+\bigg(\frac{1}{2r} \mathrm{e}^{2\beta+2\gamma}\csc^2\theta\Big( 2\cosh(2\delta)\gamma_\phi
                 +2\sinh(2\delta)\delta_\phi+2\cosh(2\delta)\beta_\phi\Big)\\
                &\quad-\frac{1}{2}(rW_r+2W)\csc\theta
                 -\frac{1}{r}\mathrm{e}^{2\beta}\beta_\theta\sinh(2\delta)\csc\theta\\
                &\quad-\frac{1}{r}\mathrm{e}^{2\beta}\delta_\theta\cosh(2\delta)\csc\theta \bigg)\Psi_\phi.
\end{align*}

\section*{Appendix B: Explicit Higher-Order Coefficients}

The following functions are some explicit higher-order coefficients in asymptotic expansions in Section 3.

\begin{align*}
    \overset{4}{\beta} &= \frac{1}{8}\bigg( ( c^2+d^2)^2-6(cC+dD)-9\tilde{I}_u L \bigg),\\
    \overset{4}{U} &=\frac{1}{8}\bigg( -16 c^3\cot\theta-4c^2(3 c_\theta+2 d_\phi \csc\theta)
    -8 d^2 c_\theta \\
    &\quad 
    -2c(2dc_\phi \csc\theta 
    +8d^2\cot\theta
    +2d d_\theta
    +12N 
    -9\tilde{I}_u^2\cot\theta)  \\
    &\quad 
    +12 C_\theta 
        +24C\cot\theta 
        -12d^2 d_\phi\csc\theta
        -24 dP 
        +12 D_\phi \csc\theta \\
        &\quad 
        +9\tilde{I}_u^2 c_\theta  
        +9 \tilde{I}_u^2 d_\phi \csc\theta
    \bigg),  \\
    \overset{5}{U} &=\frac{1}{20}\bigg( 16 c^4\cot\theta - 64 cC\cot\theta + 32 c^2d^2\cot\theta + 16 d^4\cot\theta - 64 dD\cot\theta \\
    &\quad + 36 c^2N + 36 d^2N  - 8 c^2d c_\phi\csc\theta - 8 d^3c_\phi\csc\theta + 32 dC_\phi\csc\theta  \\
    &\quad + 8 c^3d_\phi\csc\theta + 8 cd^2d_\phi\csc\theta- 32 cD_\phi\csc\theta
    + 20 c^3c_\theta + 10 Cc_\theta \\
    &\quad + 20 cd^2c_\theta  - 38 cC_\theta + 20 c^2d d_\theta + 20 d^3d_\theta + 10 Dd_\theta \\
    &\quad - 38 dD_\theta + 32 \overset{4}{\gamma}\cot\theta + 16 \overset{4}{\gamma}_{\theta}+ 16 \overset{4}{\delta}_{\phi}\csc\theta  + 15 \tilde{I}_uL_\theta \\
    &\quad  - 72 c^2\tilde{I}_u^2\cot\theta - 72 d^2\tilde{I}_u^2\cot\theta  - 54 N\tilde{I}_u^2+ 36 d\tilde{I}_u^2c_\phi\csc\theta  - 36 c\tilde{I}_u^2d_\phi\csc\theta  \\
    &\quad - 36 c\tilde{I}_u^2c_\theta  - 36 d\tilde{I}_u^2d_\theta  \bigg), \\
    \overset{4}{W} &= \frac{1}{8}\bigg(
        4 c^2 (3c_\phi\csc\theta 
        -2 d_\theta  
        -4 d\cot\theta)
        - 4d^2 (3d_\theta
        -2 c_\phi \csc\theta)  \\
    &\quad 
        -4c(d c_\theta -d d_\phi\csc\theta 
        -6 P) - 12 (C_\phi\csc\theta 
        -D_\theta-2 D\cot\theta) \\
    &\quad 
         -16d^3\cot\theta -24 d N 
        +9\tilde{I}_u^2 (d_\theta 
        +2d \cot\theta
        -c_\phi \csc\theta)
    \bigg),  \\
    \overset{5}{W} &= \frac{1}{20}\bigg(
        -96 C d\cot\theta
 +96 cD\cot\theta
 +36 c^2P
 +36 d^2P
 +20 c^3c_{\phi}\csc\theta \\
&\quad +10 Cc_{\phi}\csc\theta
 +20 cd^2c_{\phi}\csc\theta
 -38 cC_{\phi}\csc\theta
 +20 c^2dd_{\phi}\csc\theta
 +20 d^3d_{\phi}\csc\theta \\
&\quad +10 Dd_{\phi}\csc\theta
 -38 dD_{\phi}\csc\theta
 +8c^2dc_{\theta}
 +8d^3c_{\theta}
 -32dC_{\theta} \\
&\quad -8c^3d_{\theta}
 -8cd^2d_{\theta}
 +32cD_{\theta}
 +32\overset{4}{\delta}\cot\theta
 -16\overset{4}{\gamma}_{\phi}\csc\theta \\
&\quad +16\overset{4}{\delta}_{\theta}
 -54P\tilde{I}_u^2
 -36c\tilde{I}_u^2c_{\phi}\csc\theta
 -36d\tilde{I}_u^2d_{\phi}\csc\theta
 +15\tilde{I}_uL_{\phi}\csc\theta \\
&\quad -36d\tilde{I}_u^2c_{\theta}
 +36c\tilde{I}_u^2d_{\theta}
    \bigg),  \\
    \overset{1}{V} &= \frac{1}{96}\bigg[ 8\Big(2(25\cos^2\theta-1) c^2\csc^2\theta +2c(5c_{\phi\phi}\csc^2\theta+37c_\theta\cot\theta\\
                &\quad +5c_{\theta\theta}+24 d_\phi\cot\theta\csc\theta)+2d(-24 c_\phi \cot\theta\csc\theta +37 d_\theta\cot\theta \\
                &\quad +5d_{\theta\theta}+5d_{\phi\phi}\csc^2\theta )-32c_\phi d_\theta\csc\theta +32 c_\theta d_\phi\csc\theta +22 c_\theta^2\\
                &\quad +22c_\phi^2\csc^2\theta +2(24\csc^2\theta-25)d^2+22d_\phi^2\csc^2\theta+22 d_\theta^2\\
                &\quad -12 N_\theta-12 N\cot\theta-12 P_\phi\csc\theta +27\tilde{I}_u^2 \Big) \\
    &\quad +9\Lambda \Big(
    4(c^2+d^2)^2 -16(cC+dD)
        +27\tilde{I}_u^4
        +12(c^2+d^2)\tilde{I}_u^2
        -24\tilde{I}_u L
    \Big) \bigg], \\
  \Theta &= 2r_{-1}^4 \theta_1\cos\hat{\theta}\sin\hat{\theta} +2r_{-1}^4 \phi_{1,\hat{\phi}}\sin^2\hat{\theta} -2r_{-1}^4\theta_1 Yu_{0,\hat{\phi}}\cos\hat{\theta}  \\
  &\quad -4\mathrm{e}^{2B}r_{-1}^3 B_\phi u_{0,\hat{\phi}} -2r_{-1}^4 Y\phi_{1,\hat{\phi}} u_{0,\hat{\phi}}\sin\hat{\theta} -2\mathrm{e}^{2B} r_{-1}^2 r_{-1,\hat{\phi}} u_{0,\hat{\phi}}  \\
  &\quad -\mathrm{e}^{2B}r_{-1}^3 X u_{0,\hat{\phi}}^2\cot\hat{\theta} +4\mathrm{e}^{2B} r_{-1}^3 Y B_\phi u_{0,\hat{\phi}}^2\csc\hat{\theta} -2r_{-1}^4 Y u_{1,\hat{\phi}}\sin\hat{\theta}  \\
  &\quad +2r_{-1}^4 Y^2 u_{0,\hat{\phi}} u_{1,\hat{\phi}} -2\phi_1 r_{-1}^4 u_{0,\hat{\phi}} Y_\phi \sin\hat{\theta} -\mathrm{e}^{2B}r_{-1}^3 u_{0,\hat{\phi}}^2 Y_\phi \csc\hat{\theta}  \\
  &\quad +2\phi_1 r_{-1}^4 Y u_{0,\hat{\phi}}^2 Y_\phi +2r_{-1}^4 X u_{0,\hat{\phi}}\phi_{1,\hat{\theta}} \sin^2\hat{\theta} -2r_{-1}^4 X Y u_{0,\hat{\phi}}^2 \phi_{1,\hat{\theta}}\sin\hat{\theta}  \\
  &\quad -2\mathrm{e}^{2B} r_{-1}^2 X u_{0,\hat{\phi}}^2 r_{-1,\hat{\theta}} +2r_{-1}^4 \theta_{1,\hat{\theta}}\sin^2\hat{\theta} -4 r_{-1}^4 Y u_{0,\hat{\phi}} \theta_{1,\hat{\theta}}\sin\hat{\theta}  \\
  &\quad +2r_{-1}^4 Y^2 u_{0,\hat{\phi}}^2 \theta_{1,\hat{\theta}} -4 r_{-1}^4 \theta_1 X u_{0,\hat{\theta}} \cos\hat{\theta} \sin\hat{\theta}  \\
  &\quad -4r_{-1}^4 X\phi_{1,\hat{\phi}} u_{0,\hat{\theta}}\sin^2\hat{\theta} +2r_{-1}^4 Y \theta_{1,\hat{\phi}} u_{0,\hat{\theta}} \sin\hat{\theta}  \\
  &\quad +2r_{-1}^4 \theta_1 XYu_{0,\hat{\phi}} u_{0,\hat{\theta}} \cos\hat{\theta} +4\mathrm{e}^{2B}r_{-1}^3 XB_\phi u_{0,\hat{\phi}}u_{0,\hat{\theta}}  \\
  &\quad +2r_{-1}^4 XY\phi_{1,\hat{\phi}} u_{0,\hat{\phi}} u_{0,\hat{\theta}}\sin\hat{\theta} +2\mathrm{e}^{2B} r_{-1}^2 X r_{-1,\hat{\phi}} u_{0,\hat{\phi}} u_{0,\hat{\theta}}  \\
  &\quad -2r_{-1}^4 Y^2 \theta_{1,\hat{\phi}} u_{0,\hat{\phi}} u_{0,\hat{\theta}} +2 r_{-1}^4 XYu_{1,\hat{\phi}} u_{0,\hat{\theta}}\sin\hat{\theta} -2\phi_1 r_{-1}^4 X_\phi u_{0,\hat{\theta}}\sin^2\hat{\theta}  \\
  &\quad +2\phi_1 r_{-1}^4 Y u_{0,\hat{\phi}} X_\phi u_{0,\hat{\theta}}\sin\hat{\theta} +2\phi_1 r_{-1}^4 X u_{0,\hat{\phi}} Y_\phi u_{0,\hat{\theta}}\sin\hat{\theta}  \\
  &\quad -4\mathrm{e}^{2B} r_{-1}^3 B_\theta u_{0,\hat{\theta}}\sin^2\hat{\theta} +4\mathrm{e}^{2B}r_{-1}^3 Y u_{0,\hat{\phi}}B_\theta u_{0,\hat{\theta}}\sin\hat{\theta}  \\
  &\quad -2r_{-1}^4 X^2 u_{0,\hat{\phi}} \phi_{1,\hat{\theta}} u_{0,\hat{\theta}}\sin^2\hat{\theta} -2\mathrm{e}^{2B} r_{-1}^2 r_{-1,\hat{\theta}}u_{0,\hat{\theta}}\sin^2\hat{\theta}  \\
  &\quad +2\mathrm{e}^{2B} r_{-1}^2 Yu_{0,\hat{\phi}} r_{-1,\hat{\theta}}u_{0,\hat{\theta}}\sin\hat{\theta} -2r_{-1}^4 X\theta_{1,\hat{\theta}} u_{0,\hat{\theta}}\sin^2\hat{\theta}  \\
  &\quad +2r_{-1}^4 XYu_{0,\hat{\phi}}\theta_{1,\hat{\theta}} u_{0,\hat{\theta}}\sin\hat{\theta} -\mathrm{e}^{2B}r_{-1}^3 X u_{0,\hat{\theta}}^2 \cos\hat{\theta}\sin\hat{\theta}  \\
  &\quad +2r_{-1}^4 \theta_1 X^2u_{0,\hat{\theta}}^2 \cos\hat{\theta}\sin\hat{\theta} +2r_{-1}^4 X^2\phi_{1,\hat{\phi}} u_{0,\hat{\theta}}^2\sin^2\hat{\theta}  \\
  &\quad -2\mathrm{e}^{2B} r_{-1}^2 Y r_{-1,\hat{\phi}} u_{0,\hat{\theta}}^2\sin\hat{\theta} -2r_{-1}^4 X Y\theta_{1,\hat{\phi}} u_{0,\hat{\theta}}^2\sin\hat{\theta}  \\
  &\quad +2\phi_1 r_{-1}^4 XX_\phi u_{0,\hat{\theta}}^2 \sin^2\hat{\theta} -\mathrm{e}^{2B} r_{-1}^3 Y_\phi u_{0,\hat{\theta}}^2 \sin\hat{\theta}  \\
  &\quad +4\mathrm{e}^{2B}r_{-1}^3 X B_\theta u_{0,\hat{\theta}}^2\sin^2\hat{\theta} -2r_{-1}^4 X u_{1,\hat{\theta}}\sin^2\hat{\theta} +2r_{-1}^4 XY u_{0,\hat{\phi}} u_{1,\hat{\theta}}\sin\hat{\theta}  \\
  &\quad +2r_{-1}^4 X^2 u_{0,\hat{\theta}} u_{1,\hat{\theta}}\sin^2\hat{\theta} -\mathrm{e}^{2B} r_{-1}^3 u_{0,\hat{\phi}}^2 X_\theta -2r_{-1}^4 \theta_1 u_{0,\hat{\theta}} X_\theta \sin^2\hat{\theta}  \\
  &\quad +2r_{-1}^4 \theta_1 Y u_{0,\hat{\phi}} u_{0,\hat{\theta}}X_\theta \sin\hat{\theta} -\mathrm{e}^{2B} r_{-1}^3 u_{0,\hat{\theta}}^2 X_\theta \sin^2\hat{\theta}  \\
  &\quad +2r_{-1}^4 \theta_1 X u_{0,\hat{\theta}}^2 X_\theta\sin^2\hat{\theta} -2r_{-1}^4 \theta_1 u_{0,\hat{\phi}} Y_\theta\sin\hat{\theta}  \\
  &\quad +2r_{-1}^4 \theta_1 Y u_{0,\hat{\phi}}^2 Y_\theta +2r_{-1}^4 \theta_1 X u_{0,\hat{\phi}} u_{0,\hat{\theta}} Y_\theta \sin\hat{\theta} -2r_{-1}^4 u_1 u_{0,\hat{\theta}} X_u\sin^2\hat{\theta}  \\
  &\quad +2r_{-1}^4 u_1 Y u_{0,\hat{\phi}} u_{0,\hat{\theta}} X_u\sin\hat{\theta} +2r_{-1}^4 u_1 X u_{0,\hat{\theta}}^2 X_u\sin^2\hat{\theta}  \\
  &\quad -2r_{-1}^4 u_1 u_{0,\hat{\phi}} Y_u\sin\hat{\theta} +2r_{-1}^4 u_1 Y u_{0,\hat{\phi}}^2 Y_u +2r_{-1}^4 u_1 X u_{0,\hat{\phi}} u_{0,\hat{\theta}} Y_u \sin\hat{\theta}  \\
  &\quad +\mathrm{e}^{4B}\Lambda\bigg( \frac{2}{3} cr_{-1}^3 u_{0,\hat{\phi}}^2 +\frac{4}{3} \phi_1 r_{-1}^4 B_\phi u_{0,\hat{\phi}}^2 + \frac{2}{3} r_{-1}^4 u_{0,\hat{\phi}} u_{1,\hat{\phi}}  \\
  &\qquad +\frac{4}{3} r_{-1}^4 \theta_1 u_{0,\hat{\phi}}^2 B_\theta +\frac{2}{3} r_{-1}^4 u_{0,\hat{\phi}}^2 \theta_{1,\hat{\theta}} -\frac{4}{3} d r_{-1}^3 u_{0,\hat{\phi}} u_{0,\hat{\theta}}\sin\hat{\theta}  \\
  &\qquad -\frac{2}{3} r_{-1}^4 \theta_{1,\hat{\phi}} u_{0,\hat{\phi}} u_{0,\hat{\theta}} -\frac{2}{3} r_{-1}^4 u_{0,\hat{\phi}}\phi_{1,\hat{\theta}} u_{0,\hat{\theta}}\sin^2\bar{\theta} -\frac{2}{3} cr_{-1}^3 u_{0,\hat{\theta}}^2 \sin^2\hat{\theta}  \\
  &\qquad +\frac{2}{3} r_{-1}^4 \theta_1 u_{0,\hat{\theta}}^2 \cos\hat{\theta}\sin\hat{\theta} +\frac{4}{3} \phi_1 r_{-1}^4 B_\phi u_{0,\hat{\theta}}^2\sin^2\hat{\theta} +\frac{2}{3} r_{-1}^4 \phi_{1,\hat{\phi}} u_{0,\hat{\theta}}^2\sin^2\hat{\theta}  \\
  &\qquad +\frac{4}{3} r_{-1}^4 \theta_1 B_\theta u_{0,\hat{\theta}}^2 \sin^2\hat{\theta} +\frac{2}{3} r_{-1}^4 u_{0,\hat{\theta}} u_{1,\hat{\theta}}\sin^2\hat{\theta}  \\
  &\qquad +\frac{4}{3} r_{-1}^4 u_1 u_{0,\hat{\phi}}^2 B_u +\frac{4}{3} r_{-1}^4 u_1 u_{0,\hat{\theta}}^2 B_u\sin^2\hat{\theta} \bigg)
\end{align*}

\section*{Appendix C: Explicit Formulas for $\tensor[_{\Lambda}]{\overset{2}{\Omega}}{_{00}}$, $\tensor[_{\Lambda^2}]{\overset{2}{\Omega}}{_{00}}$, $\tensor[_{\Lambda^3}]{\overset{2}{\Omega}}{_{00}}$}

\begin{align*}
 \tensor[_{\Lambda}]{\overset{2}{\Omega}}{_{00}} &= \frac{7 c^2}{72} + \frac{7 c^2\cot^2\theta}{72} - \frac{103 c^2\csc^2\theta}{72} + \frac{7 d^2}{72} + \frac{7 d^2\cot^2\theta}{72} + N\cot\theta\\
 &\quad - \frac{103 d^2\csc^2\theta}{72}  - \frac{d\tilde{Y}\cos\theta}{2} - \frac{d\tilde{Y}\cos\theta\cot^2\theta}{2} + \frac{d\tilde{Y}\cot\theta\csc\theta}{2} \\
 &\quad - 3 \tilde{I}_{u}^2 - 18 \tilde{X}\tilde{I}_{u}\tilde{I}_{uu}\cos\theta + \frac{4 dc_{\phi}\cot\theta\csc\theta}{3} + 5 \tilde{Y}c_{\phi} + 5 \tilde{Y}c_{\phi}\cot^2\theta \\
 &\quad - \tilde{Y}c_{\phi}\csc^2\theta + \frac{c_{\phi}^2\csc^2\theta}{2} - \frac{4 cd_{\phi}\cot\theta\csc\theta}{3} - \tilde{X}d_{\phi} - \tilde{X}d_{\phi}\cot^2\theta \\
 &\quad + 5 \tilde{X}d_{\phi}\csc^2\theta + \frac{d_{\phi}^2\csc^2\theta}{2} + P_{\phi}\csc\theta + d\tilde{X}_{\phi} + 3 d\tilde{X}_{\phi}\cot^2\theta \\
 &\quad + d\tilde{X}_{\phi}\csc^2\theta + 2 c_{\phi}\tilde{X}_{\phi}\cot\theta\csc\theta + 2 c\tilde{Y}_{\phi} - 9 \tilde{I}_{u}\tilde{I}_{uu}\tilde{Y}_{\phi}  \\ 
 &\quad + \frac{5 cc_{\phi\phi}\csc^2\theta}{6} - \tilde{Y}_{\phi}c_{\phi\phi}\csc^2\theta + \frac{5 dd_{\phi\phi}\csc^2\theta}{6} + \tilde{X}_{\phi}d_{\phi\phi}\csc^2\theta  \\
 &\quad + d_{\phi}\tilde{X}_{\phi\phi}\csc^2\theta + 4 d\tilde{Y}_{\phi\phi}\cot\theta\csc\theta - c_{\phi}\tilde{Y}_{\phi\phi}\csc^2\theta + \tilde{Y}c_{\phi\phi\phi}\csc^2\theta  \\
 &\quad + c\tilde{Y}_{\phi\phi\phi}\csc^2\theta + \frac{5 cc_{\theta}\cot\theta}{6} - 4 \tilde{X}c_{\theta}\cos\theta\cot\theta + 7 \tilde{X}c_{\theta}\csc\theta  \\ 
         &\quad + 2 d_{\phi}\tilde{Y}_{\phi}\cot\theta\csc\theta
         + 4 c\tilde{X}_{\phi\phi}\cot\theta\csc\theta
         + \tilde{X}d_{\phi\phi\phi}\csc^2\theta
         - 2 \tilde{X}c_{\theta}\sin\theta \\  
 &\quad - 3 \tilde{Y}_{\phi}c_{\theta}\cot\theta + \tilde{X}_{\phi\phi}c_{\theta}\csc\theta + \frac{c_{\theta}^2}{2} + \frac{5 dd_{\theta}\cot\theta}{6} + \tilde{Y}d_{\theta}\cos\theta\cot\theta \\
 &\quad - 4 \tilde{Y}d_{\theta}\csc\theta - \tilde{Y}d_{\theta}\sin\theta + 3 \tilde{X}_{\phi}d_{\theta}\cot\theta + \tilde{Y}_{\phi\phi}d_{\theta}\csc\theta + \frac{d_{\theta}^2}{2} \\
 &\quad + N_{\theta} + 5 c\tilde{X}_{\theta}\cos\theta\cot\theta + 3 c\tilde{X}_{\theta}\sin\theta - 9 \tilde{I}_{u}\tilde{I}_{uu}\tilde{X}_{\theta}\sin\theta  \\ 
 &\quad + c_{\phi\phi}\tilde{X}_{\theta}\csc\theta + 4 c_{\theta}\tilde{X}_{\theta}\cos\theta + 2 d\tilde{Y}_{\theta}\cos\theta\cot\theta - 2 d\tilde{Y}_{\theta}\sin\theta  \\
 &\quad + d_{\phi\phi}\tilde{Y}_{\theta}\csc\theta + 4 d_{\theta}\tilde{Y}_{\theta}\cos\theta - 3 \tilde{Y}c_{\theta\phi}\cot\theta - 3 \tilde{X}d_{\theta\phi}\cot\theta  \\
 &\quad + c\tilde{Y}_{\theta\phi}\cot\theta + 3 \tilde{X}c_{\theta\phi\phi}\csc\theta - 3 \tilde{Y}d_{\theta\phi\phi}\csc\theta + c\tilde{X}_{\theta\phi\phi}\csc\theta  \\ 
       &\quad - d_{\phi}\tilde{X}_{\theta}\cot\theta
       + c_{\phi}\tilde{Y}_{\theta}\cot\theta
       - 4 d\tilde{X}_{\theta\phi}\cot\theta
       - 2 d\tilde{Y}_{\theta\phi\phi}\csc\theta  \\ 
 &\quad + \frac{5 cc_{\theta\theta}}{6} - 3 \tilde{X}c_{\theta\theta}\cos\theta - \tilde{Y}_{\phi}c_{\theta\theta} + \tilde{X}_{\theta}c_{\theta\theta}\sin\theta + \frac{5 dd_{\theta\theta}}{6} \\
 &\quad + 3 \tilde{Y}d_{\theta\theta}\cos\theta + \tilde{X}_{\phi}d_{\theta\theta} + \tilde{Y}_{\theta}d_{\theta\theta}\sin\theta - c\tilde{X}_{\theta\theta}\cos\theta + d_{\phi}\tilde{X}_{\theta\theta} \\
 &\quad + c_{\theta}\tilde{X}_{\theta\theta}\sin\theta + 2 d\tilde{Y}_{\theta\theta}\cos\theta - c_{\phi}\tilde{Y}_{\theta\theta} + d_{\theta}\tilde{Y}_{\theta\theta}\sin\theta - 3 \tilde{Y}c_{\theta\theta\phi} \\
 &\quad - 3 \tilde{X}d_{\theta\theta\phi} - 2 d\tilde{X}_{\theta\theta\phi} - c\tilde{Y}_{\theta\theta\phi} - \tilde{X}c_{\theta\theta\theta}\sin\theta + \tilde{Y}d_{\theta\theta\theta}\sin\theta \\
 &\quad - c\tilde{X}_{\theta\theta\theta}\sin\theta - 4 c\tilde{X}c_{u}\cos\theta - 8 d\tilde{Y}c_{u}\cos\theta - \tilde{Y}c_{\phi}c_{u} + 4 \tilde{X}d_{\phi}c_{u} \\
 &\quad + 4 d\tilde{X}_{\phi}c_{u} - 2 c\tilde{Y}_{\phi}c_{u} - \tilde{X}c_{\theta}c_{u}\sin\theta - 4 \tilde{Y}d_{\theta}c_{u}\sin\theta - 2 c\tilde{X}_{\theta}c_{u}\sin\theta \\
 &\quad - 4 d\tilde{Y}_{\theta}c_{u}\sin\theta - 4 d\tilde{X}d_{u}\cos\theta + 8 c\tilde{Y}d_{u}\cos\theta - 4 \tilde{X}c_{\phi}d_{u} - \tilde{Y}d_{\phi}d_{u} \\
 &\quad - 4 c\tilde{X}_{\phi}d_{u} - 2 d\tilde{Y}_{\phi}d_{u} + 4 \tilde{Y}c_{\theta}d_{u}\sin\theta - \tilde{X}d_{\theta}d_{u}\sin\theta - 2 d\tilde{X}_{\theta}d_{u}\sin\theta \\
 &\quad + 4 c\tilde{Y}_{\theta}d_{u}\sin\theta - 6 \tilde{X}N_{u}\sin\theta - 6 \tilde{Y}P_{u}\sin\theta + 4 d\tilde{X}c_{u\phi} - c\tilde{Y}c_{u\phi} \\
 &\quad - 4 c\tilde{X}d_{u\phi} - d\tilde{Y}d_{u\phi} - c\tilde{X}c_{u\theta}\sin\theta - 4 d\tilde{Y}c_{u\theta}\sin\theta - d\tilde{X}d_{u\theta}\sin\theta \\
 &\quad + 4 c\tilde{Y}d_{u\theta}\sin\theta, \\
           \tensor[_{\Lambda^2}]{\overset{2}{\Omega}}{_{00}} &=
      - \frac{4 c^4}{9} + \frac{2 cC}{3} - \frac{8 c^2d^2}{9} - \frac{4 d^4}{9} + \frac{2 dD}{3} - \frac{8 c^3\tilde{X}\cos\theta}{9} \\
 &\quad + 4 C\tilde{X}\cos\theta - \frac{8 cd^2\tilde{X}\cos\theta}{9} - \frac{8 cN\tilde{X}\sin\theta}{3} - \frac{8 dP\tilde{X}\sin\theta}{3} \\
 &\quad - 2 c^2\tilde{X}^2 - \frac{4 c^2\tilde{X}^2\cos 2\theta}{3} - 2 d^2\tilde{X}^2 - \frac{4 d^2\tilde{X}^2\cos 2\theta}{3} \\
 &\quad - 17 N\tilde{X}^2\cos\theta\sin\theta - \frac{8 c^2d\tilde{Y}\cos\theta}{9} - \frac{8 d^3\tilde{Y}\cos\theta}{9} \\
 &\quad + 4 D\tilde{Y}\cos\theta - \frac{8 dN\tilde{Y}\sin\theta}{3} + \frac{8 cP\tilde{Y}\sin\theta}{3} - 8 P\tilde{X}\tilde{Y}\sin 2\theta \\
 &\quad + 2 c^2\tilde{Y}^2 - \frac{2 c^2\tilde{Y}^2\cos 2\theta}{3} + 2 d^2\tilde{Y}^2 - \frac{2 d^2\tilde{Y}^2\cos 2\theta}{3} \\
 &\quad - N\tilde{Y}^2\cos\theta\sin\theta + L\tilde{I}_{u} - 3 c^2\tilde{I}_{u}^2 - 3 d^2\tilde{I}_{u}^2 - 9 c\tilde{X}\tilde{I}_{u}^2\cos\theta \\
 &\quad - 9 \tilde{X}^2\tilde{I}_{u}^2\cos^2\theta + 9 \tilde{X}^2\tilde{I}_{u}^2\sin^2\theta - 9 d\tilde{Y}\tilde{I}_{u}^2\cos\theta - \frac{27 \tilde{I}_{u}^4}{4} \\
 &\quad - \frac{8 cd\tilde{X}c_{\phi}}{3} - \frac{4 d\tilde{X}^2c_{\phi}\cos\theta}{3} + \frac{28 c^2\tilde{Y}c_{\phi}}{9} + \frac{4 d^2\tilde{Y}c_{\phi}}{9} \\
 &\quad + \frac{4 c\tilde{X}\tilde{Y}c_{\phi}\cos\theta}{3} - \frac{20 d\tilde{Y}^2c_{\phi}\cos\theta}{3} + \frac{9 \tilde{Y}\tilde{I}_{u}^2c_{\phi}}{2} - \frac{5 \tilde{X}^2c_{\phi}^2}{6} \\
 &\quad + \frac{11 \tilde{Y}^2c_{\phi}^2}{6} - 2 \tilde{Y}C_{\phi} - \frac{4 c^2\tilde{X}d_{\phi}}{9} - \frac{28 d^2\tilde{X}d_{\phi}}{9} + \frac{4 c\tilde{X}^2d_{\phi}\cos\theta}{3} \\
 &\quad + \frac{8 cd\tilde{Y}d_{\phi}}{3} + \frac{4 d\tilde{X}\tilde{Y}d_{\phi}\cos\theta}{3} + \frac{20 c\tilde{Y}^2d_{\phi}\cos\theta}{3} - \frac{9 \tilde{X}\tilde{I}_{u}^2d_{\phi}}{2} \\
 &\quad - \frac{5 \tilde{X}^2d_{\phi}^2}{6} + \frac{11 \tilde{Y}^2d_{\phi}^2}{6} + 2 \tilde{X}D_{\phi} - 8 \tilde{X}\tilde{Y}N_{\phi}\sin\theta + \tilde{X}^2P_{\phi}\sin\theta \\
 &\quad - 7 \tilde{Y}^2P_{\phi}\sin\theta + 4 P\tilde{X}\tilde{X}_{\phi}\sin\theta + \frac{10 c^2\tilde{Y}\tilde{X}_{\phi}\cos\theta}{3} \\
 &\quad + \frac{10 d^2\tilde{Y}\tilde{X}_{\phi}\cos\theta}{3} - 2 N\tilde{Y}\tilde{X}_{\phi}\sin\theta - 9 \tilde{Y}\tilde{I}_{u}^2\tilde{X}_{\phi}\cos\theta \\
 &\quad - \frac{40 c\tilde{X}c_{\phi}\tilde{X}_{\phi}}{3} - \frac{8 d\tilde{Y}c_{\phi}\tilde{X}_{\phi}}{3} - \frac{28 d\tilde{X}d_{\phi}\tilde{X}_{\phi}}{3} - \frac{4 c\tilde{Y}d_{\phi}\tilde{X}_{\phi}}{3} \\
 &\quad - 4 c^2\tilde{X}_{\phi}^2 - 2 d^2\tilde{X}_{\phi}^2 + \frac{2 c^2\tilde{X}\tilde{Y}_{\phi}\cos\theta}{3} + \frac{2 d^2\tilde{X}\tilde{Y}_{\phi}\cos\theta}{3} \\
 &\quad - 10 N\tilde{X}\tilde{Y}_{\phi}\sin\theta - 8 P\tilde{Y}\tilde{Y}_{\phi}\sin\theta - 9 \tilde{X}\tilde{I}_{u}^2\tilde{Y}_{\phi}\cos\theta + \frac{8 d\tilde{X}c_{\phi}\tilde{Y}_{\phi}}{3} \\
 &\quad - \frac{10 c\tilde{Y}c_{\phi}\tilde{Y}_{\phi}}{3} - \frac{20 c\tilde{X}d_{\phi}\tilde{Y}_{\phi}}{3} - \frac{22 d\tilde{Y}d_{\phi}\tilde{Y}_{\phi}}{3} - 4 cd\tilde{X}_{\phi}\tilde{Y}_{\phi} \\
 &\quad - \frac{c^2\tilde{Y}_{\phi}^2}{2} - \frac{5 d^2\tilde{Y}_{\phi}^2}{2} - \frac{9 \tilde{I}_{u}^2\tilde{Y}_{\phi}^2}{4} - \frac{11 c\tilde{X}^2c_{\phi\phi}}{6} + \frac{4 d\tilde{X}\tilde{Y}c_{\phi\phi}}{3} \\
 &\quad + \frac{5 c\tilde{Y}^2c_{\phi\phi}}{6} - \frac{11 d\tilde{X}^2d_{\phi\phi}}{6} - \frac{4 c\tilde{X}\tilde{Y}d_{\phi\phi}}{3} + \frac{5 d\tilde{Y}^2d_{\phi\phi}}{6} \\
 &\quad - 8 c^2\tilde{X}\tilde{X}_{\phi\phi} - 4 d^2\tilde{X}\tilde{X}_{\phi\phi} - 4 cd\tilde{Y}\tilde{X}_{\phi\phi} - 4 cd\tilde{X}\tilde{Y}_{\phi\phi} - c^2\tilde{Y}\tilde{Y}_{\phi\phi} \\
 &\quad - 5 d^2\tilde{Y}\tilde{Y}_{\phi\phi} - \frac{9 \tilde{Y}\tilde{I}_{u}^2\tilde{Y}_{\phi\phi}}{2} - \frac{28 c^2\tilde{X}c_{\theta}\sin\theta}{9} - \frac{4 d^2\tilde{X}c_{\theta}\sin\theta}{9} \\
 &\quad + \frac{c\tilde{X}^2c_{\theta}\cos\theta\sin\theta}{2} - \frac{8 cd\tilde{Y}c_{\theta}\sin\theta}{3} - \frac{20 d\tilde{X}\tilde{Y}c_{\theta}\cos\theta\sin\theta}{3} \\
 &\quad - \frac{7 c\tilde{Y}^2c_{\theta}\cos\theta\sin\theta}{2} - \frac{9 \tilde{X}\tilde{I}_{u}^2c_{\theta}\sin\theta}{2} + \frac{16 \tilde{X}\tilde{Y}c_{\phi}c_{\theta}\sin\theta}{3} \\
 &\quad + 4 \tilde{X}^2d_{\phi}c_{\theta}\sin\theta + 4 \tilde{Y}^2d_{\phi}c_{\theta}\sin\theta + \frac{20 d\tilde{X}\tilde{X}_{\phi}c_{\theta}\sin\theta}{3} \\
 &\quad + \frac{5 c\tilde{Y}\tilde{X}_{\phi}c_{\theta}\sin\theta}{3} + \frac{c\tilde{X}\tilde{Y}_{\phi}c_{\theta}\sin\theta}{3} - \frac{4 d\tilde{Y}\tilde{Y}_{\phi}c_{\theta}\sin\theta}{3} \\
 &\quad + \frac{11 \tilde{X}^2c_{\theta}^2\sin^2\theta}{6} - \frac{5 \tilde{Y}^2c_{\theta}^2\sin^2\theta}{6} + 2 \tilde{X}C_{\theta}\sin\theta \\
 &\quad - \frac{8 cd\tilde{X}d_{\theta}\sin\theta}{3} + \frac{d\tilde{X}^2d_{\theta}\cos\theta\sin\theta}{2} - \frac{4 c^2\tilde{Y}d_{\theta}\sin\theta}{9} \\
 &\quad - \frac{28 d^2\tilde{Y}d_{\theta}\sin\theta}{9} + \frac{10 c\tilde{X}\tilde{Y}d_{\theta}\sin 2\theta}{3} - \frac{7 d\tilde{Y}^2d_{\theta}\cos\theta\sin\theta}{2} \\
 &\quad - \frac{9 \tilde{Y}\tilde{I}_{u}^2d_{\theta}\sin\theta}{2} - 4 \tilde{X}^2c_{\phi}d_{\theta}\sin\theta - 4 \tilde{Y}^2c_{\phi}d_{\theta}\sin\theta \\
 &\quad + \frac{16 \tilde{X}\tilde{Y}d_{\phi}d_{\theta}\sin\theta}{3} - \frac{8 c\tilde{X}\tilde{X}_{\phi}d_{\theta}\sin\theta}{3} + \frac{17 d\tilde{Y}\tilde{X}_{\phi}d_{\theta}\sin\theta}{3} \\
 &\quad + \frac{13 d\tilde{X}\tilde{Y}_{\phi}d_{\theta}\sin\theta}{3} - \frac{8 c\tilde{Y}\tilde{Y}_{\phi}d_{\theta}\sin\theta}{3} + \frac{11 \tilde{X}^2d_{\theta}^2\sin^2\theta}{6} \\
 &\quad - \frac{5 \tilde{Y}^2d_{\theta}^2\sin^2\theta}{6} + 2 \tilde{Y}D_{\theta}\sin\theta - 7 \tilde{X}^2N_{\theta}\sin^2\theta + \tilde{Y}^2N_{\theta}\sin^2\theta \\
 &\quad - 8 \tilde{X}\tilde{Y}P_{\theta}\sin^2\theta - \frac{23 c^2\tilde{X}\tilde{X}_{\theta}\cos\theta\sin\theta}{3} - \frac{35 d^2\tilde{X}\tilde{X}_{\theta}\cos\theta\sin\theta}{3} \\
 &\quad - 8 N\tilde{X}\tilde{X}_{\theta}\sin^2\theta + 4 cd\tilde{Y}\tilde{X}_{\theta}\cos\theta\sin\theta - 10 P\tilde{Y}\tilde{X}_{\theta}\sin^2\theta \\
 &\quad - \frac{45 \tilde{X}\tilde{I}_{u}^2\tilde{X}_{\theta}\cos\theta\sin\theta}{2} + \frac{4 d\tilde{X}c_{\phi}\tilde{X}_{\theta}\sin\theta}{3} + \frac{c\tilde{Y}c_{\phi}\tilde{X}_{\theta}\sin\theta}{3} \\
 &\quad + \frac{8 c\tilde{X}d_{\phi}\tilde{X}_{\theta}\sin\theta}{3} + \frac{13 d\tilde{Y}d_{\phi}\tilde{X}_{\theta}\sin\theta}{3} + 4 cd\tilde{X}_{\phi}\tilde{X}_{\theta}\sin\theta \\
 &\quad - c^2\tilde{Y}_{\phi}\tilde{X}_{\theta}\sin\theta + 3 d^2\tilde{Y}_{\phi}\tilde{X}_{\theta}\sin\theta - \frac{9 \tilde{I}_{u}^2\tilde{Y}_{\phi}\tilde{X}_{\theta}\sin\theta}{2} \\
 &\quad - \frac{5 c\tilde{X}c_{\theta}\tilde{X}_{\theta}}{3} + \frac{5 c\tilde{X}c_{\theta}\tilde{X}_{\theta}\cos 2\theta}{3} - \frac{8 d\tilde{Y}c_{\theta}\tilde{X}_{\theta}\sin^2\theta}{3} \\
 &\quad - \frac{22 d\tilde{X}d_{\theta}\tilde{X}_{\theta}\sin^2\theta}{3} + \frac{20 c\tilde{Y}d_{\theta}\tilde{X}_{\theta}\sin^2\theta}{3} - \frac{c^2\tilde{X}_{\theta}^2\sin^2\theta}{2} \\
 &\quad - \frac{5 d^2\tilde{X}_{\theta}^2\sin^2\theta}{2} - \frac{9 \tilde{I}_{u}^2\tilde{X}_{\theta}^2\sin^2\theta}{4} + 4 cd\tilde{X}\tilde{Y}_{\theta}\cos\theta\sin\theta \\
 &\quad - 2 P\tilde{X}\tilde{Y}_{\theta}\sin^2\theta - \frac{56 c^2\tilde{Y}\tilde{Y}_{\theta}\cos\theta\sin\theta}{3} - \frac{44 d^2\tilde{Y}\tilde{Y}_{\theta}\cos\theta\sin\theta}{3} \\
 &\quad + 4 N\tilde{Y}\tilde{Y}_{\theta}\sin^2\theta + \frac{5 c\tilde{X}c_{\phi}\tilde{Y}_{\theta}\sin\theta}{3} - \frac{20 d\tilde{Y}c_{\phi}\tilde{Y}_{\theta}\sin\theta}{3} \\
 &\quad + \frac{17 d\tilde{X}d_{\phi}\tilde{Y}_{\theta}\sin\theta}{3} + \frac{8 c\tilde{Y}d_{\phi}\tilde{Y}_{\theta}\sin\theta}{3} + 4 d^2\tilde{X}_{\phi}\tilde{Y}_{\theta}\sin\theta \\
 &\quad - 4 cd\tilde{Y}_{\phi}\tilde{Y}_{\theta}\sin\theta + \frac{8 d\tilde{X}c_{\theta}\tilde{Y}_{\theta}\sin^2\theta}{3} - \frac{40 c\tilde{Y}c_{\theta}\tilde{Y}_{\theta}\sin^2\theta}{3} \\
 &\quad + \frac{2 c\tilde{X}d_{\theta}\tilde{Y}_{\theta}}{3} - \frac{2 c\tilde{X}d_{\theta}\tilde{Y}_{\theta}\cos 2\theta}{3} - \frac{28 d\tilde{Y}d_{\theta}\tilde{Y}_{\theta}\sin^2\theta}{3} \\
 &\quad + 4 cd\tilde{X}_{\theta}\tilde{Y}_{\theta}\sin^2\theta - 4 c^2\tilde{Y}_{\theta}^2\sin^2\theta - 2 d^2\tilde{Y}_{\theta}^2\sin^2\theta + \frac{4 d\tilde{X}^2c_{\theta\phi}\sin\theta}{3} \\
 &\quad + \frac{16 c\tilde{X}\tilde{Y}c_{\theta\phi}\sin\theta}{3} - \frac{4 d\tilde{Y}^2c_{\theta\phi}\sin\theta}{3} - \frac{4 c\tilde{X}^2d_{\theta\phi}\sin\theta}{3} \\
 &\quad + \frac{16 d\tilde{X}\tilde{Y}d_{\theta\phi}\sin\theta}{3} + \frac{4 c\tilde{Y}^2d_{\theta\phi}\sin\theta}{3} + 8 cd\tilde{X}\tilde{X}_{\theta\phi}\sin\theta \\
 &\quad - c^2\tilde{Y}\tilde{X}_{\theta\phi}\sin\theta + 7 d^2\tilde{Y}\tilde{X}_{\theta\phi}\sin\theta - \frac{9 \tilde{Y}\tilde{I}_{u}^2\tilde{X}_{\theta\phi}\sin\theta}{2} \\
 &\quad - c^2\tilde{X}\tilde{Y}_{\theta\phi}\sin\theta + 7 d^2\tilde{X}\tilde{Y}_{\theta\phi}\sin\theta - 8 cd\tilde{Y}\tilde{Y}_{\theta\phi}\sin\theta \\
 &\quad - \frac{9 \tilde{X}\tilde{I}_{u}^2\tilde{Y}_{\theta\phi}\sin\theta}{2} + \frac{5 c\tilde{X}^2c_{\theta\theta}\sin^2\theta}{6} - \frac{4 d\tilde{X}\tilde{Y}c_{\theta\theta}\sin^2\theta}{3} \\
 &\quad - \frac{11 c\tilde{Y}^2c_{\theta\theta}\sin^2\theta}{6} + \frac{5 d\tilde{X}^2d_{\theta\theta}\sin^2\theta}{6} + \frac{2 c\tilde{X}\tilde{Y}d_{\theta\theta}}{3} \\
 &\quad - \frac{2 c\tilde{X}\tilde{Y}d_{\theta\theta}\cos 2\theta}{3} - \frac{11 d\tilde{Y}^2d_{\theta\theta}\sin^2\theta}{6} - c^2\tilde{X}\tilde{X}_{\theta\theta}\sin^2\theta \\
 &\quad - 5 d^2\tilde{X}\tilde{X}_{\theta\theta}\sin^2\theta + 4 cd\tilde{Y}\tilde{X}_{\theta\theta}\sin^2\theta - \frac{9 \tilde{X}\tilde{I}_{u}^2\tilde{X}_{\theta\theta}\sin^2\theta}{2} \\
 &\quad + 4 cd\tilde{X}\tilde{Y}_{\theta\theta}\sin^2\theta - 8 c^2\tilde{Y}\tilde{Y}_{\theta\theta}\sin^2\theta - 4 d^2\tilde{Y}\tilde{Y}_{\theta\theta}\sin^2\theta \\
 &\quad + 2 c^2\tilde{X}^2c_{u}\sin^2\theta - 2 d^2\tilde{X}^2c_{u}\sin^2\theta + 8 cd\tilde{X}\tilde{Y}c_{u}\sin^2\theta \\
 &\quad - 2 c^2\tilde{Y}^2c_{u}\sin^2\theta + 2 d^2\tilde{Y}^2c_{u}\sin^2\theta - 4 \tilde{X}^2C_{u}\sin^2\theta \\
 &\quad + 4 \tilde{Y}^2C_{u}\sin^2\theta + 4 cd\tilde{X}^2d_{u}\sin^2\theta - 4 c^2\tilde{X}\tilde{Y}d_{u}\sin^2\theta \\
 &\quad + 4 d^2\tilde{X}\tilde{Y}d_{u}\sin^2\theta - 4 cd\tilde{Y}^2d_{u}\sin^2\theta - 8 \tilde{X}\tilde{Y}D_{u}\sin^2\theta, \\
           \tensor[_{\Lambda^3}]{\overset{2}{\Omega}}{_{00}} &=   
 - \frac{4 c^4\tilde{X}^2\sin^2\theta}{3} - \frac{4 cC\tilde{X}^2\sin^2\theta}{3} - \frac{8 c^2d^2\tilde{X}^2\sin^2\theta}{3} \\
 &\quad - \frac{4 d^4\tilde{X}^2\sin^2\theta}{3} - \frac{4 dD\tilde{X}^2\sin^2\theta}{3} + \frac{4 \overset{4}{\gamma}\tilde{X}^2\sin^2\theta}{3} \\
 &\quad + 2 cd^2\tilde{X}^3\cos\theta\sin^2\theta + c^3\tilde{X}^3\sin\theta\sin 2\theta - 6 C\tilde{X}^3\sin\theta\sin 2\theta \\
 &\quad + \frac{8 \overset{4}{\delta}\tilde{X}\tilde{Y}\sin^2\theta}{3} + 4 c^2d\tilde{X}^2\tilde{Y}\cos\theta\sin^2\theta - 24 D\tilde{X}^2\tilde{Y}\cos\theta\sin^2\theta \\
 &\quad + 2 d^3\tilde{X}^2\tilde{Y}\sin\theta\sin 2\theta - \frac{4 c^4\tilde{Y}^2\sin^2\theta}{3} - \frac{4 cC\tilde{Y}^2\sin^2\theta}{3} \\
 &\quad - \frac{8 c^2d^2\tilde{Y}^2\sin^2\theta}{3} - \frac{4 d^4\tilde{Y}^2\sin^2\theta}{3} - \frac{4 dD\tilde{Y}^2\sin^2\theta}{3} \\
 &\quad - \frac{4 \overset{4}{\gamma}\tilde{Y}^2\sin^2\theta}{3} - 2 cd^2\tilde{X}\tilde{Y}^2\cos\theta\sin^2\theta - c^3\tilde{X}\tilde{Y}^2\sin\theta\sin 2\theta \\
 &\quad + 6 C\tilde{X}\tilde{Y}^2\sin\theta\sin 2\theta - 3 c^2\tilde{X}^2\tilde{I}_{u}^2\sin^2\theta - 3 d^2\tilde{X}^2\tilde{I}_{u}^2\sin^2\theta \\
 &\quad - 3 c^2\tilde{Y}^2\tilde{I}_{u}^2\sin^2\theta - 3 d^2\tilde{Y}^2\tilde{I}_{u}^2\sin^2\theta + 2 c^2\tilde{X}^2\tilde{Y}c_{\phi}\sin^2\theta \\
 &\quad - 2 d^2\tilde{X}^2\tilde{Y}c_{\phi}\sin^2\theta + 8 cd\tilde{X}\tilde{Y}^2c_{\phi}\sin^2\theta - 2 c^2\tilde{Y}^3c_{\phi}\sin^2\theta \\
 &\quad + 2 d^2\tilde{Y}^3c_{\phi}\sin^2\theta - 4 \tilde{X}^2\tilde{Y}C_{\phi}\sin^2\theta + 4 \tilde{Y}^3C_{\phi}\sin^2\theta \\
 &\quad + 4 cd\tilde{X}^2\tilde{Y}d_{\phi}\sin^2\theta - 4 c^2\tilde{X}\tilde{Y}^2d_{\phi}\sin^2\theta + 4 d^2\tilde{X}\tilde{Y}^2d_{\phi}\sin^2\theta \\
 &\quad - 4 cd\tilde{Y}^3d_{\phi}\sin^2\theta - 8 \tilde{X}\tilde{Y}^2D_{\phi}\sin^2\theta + 4 c^2d\tilde{X}^2\tilde{X}_{\phi}\sin^2\theta \\
 &\quad + 4 d^3\tilde{X}^2\tilde{X}_{\phi}\sin^2\theta + 6 D\tilde{X}^2\tilde{X}_{\phi}\sin^2\theta - 4 c^3\tilde{X}\tilde{Y}\tilde{X}_{\phi}\sin^2\theta \\
 &\quad - 8 C\tilde{X}\tilde{Y}\tilde{X}_{\phi}\sin^2\theta - 4 cd^2\tilde{X}\tilde{Y}\tilde{X}_{\phi}\sin^2\theta - 2 D\tilde{Y}^2\tilde{X}_{\phi}\sin^2\theta \\
 &\quad + \frac{9 d\tilde{X}^2\tilde{I}_{u}^2\tilde{X}_{\phi}\sin^2\theta}{2} + \frac{9 d\tilde{Y}^2\tilde{I}_{u}^2\tilde{X}_{\phi}\sin^2\theta}{2} - c^3\tilde{X}^2\tilde{Y}_{\phi}\sin^2\theta \\
 &\quad - 8 C\tilde{X}^2\tilde{Y}_{\phi}\sin^2\theta - cd^2\tilde{X}^2\tilde{Y}_{\phi}\sin^2\theta + 2 c^2d\tilde{X}\tilde{Y}\tilde{Y}_{\phi}\sin^2\theta \\
 &\quad + 2 d^3\tilde{X}\tilde{Y}\tilde{Y}_{\phi}\sin^2\theta - 12 D\tilde{X}\tilde{Y}\tilde{Y}_{\phi}\sin^2\theta - 3 c^3\tilde{Y}^2\tilde{Y}_{\phi}\sin^2\theta \\
 &\quad + 4 C\tilde{Y}^2\tilde{Y}_{\phi}\sin^2\theta - 3 cd^2\tilde{Y}^2\tilde{Y}_{\phi}\sin^2\theta - \frac{9 c\tilde{X}^2\tilde{I}_{u}^2\tilde{Y}_{\phi}\sin^2\theta}{2} \\
 &\quad - \frac{9 c\tilde{Y}^2\tilde{I}_{u}^2\tilde{Y}_{\phi}\sin^2\theta}{2} + 2 c^2\tilde{X}^3c_{\theta}\sin^3\theta - 2 d^2\tilde{X}^3c_{\theta}\sin^3\theta \\
 &\quad + 8 cd\tilde{X}^2\tilde{Y}c_{\theta}\sin^3\theta - 2 c^2\tilde{X}\tilde{Y}^2c_{\theta}\sin^3\theta + 2 d^2\tilde{X}\tilde{Y}^2c_{\theta}\sin^3\theta \\
 &\quad - 4 \tilde{X}^3C_{\theta}\sin^3\theta + 4 \tilde{X}\tilde{Y}^2C_{\theta}\sin^3\theta + 4 cd\tilde{X}^3d_{\theta}\sin^3\theta \\
 &\quad - 4 c^2\tilde{X}^2\tilde{Y}d_{\theta}\sin^3\theta + 4 d^2\tilde{X}^2\tilde{Y}d_{\theta}\sin^3\theta - 4 cd\tilde{X}\tilde{Y}^2d_{\theta}\sin^3\theta \\
 &\quad - 8 \tilde{X}^2\tilde{Y}D_{\theta}\sin^3\theta + 3 c^3\tilde{X}^2\tilde{X}_{\theta}\sin^3\theta - 4 C\tilde{X}^2\tilde{X}_{\theta}\sin^3\theta \\
 &\quad + 3 cd^2\tilde{X}^2\tilde{X}_{\theta}\sin^3\theta + 2 c^2d\tilde{X}\tilde{Y}\tilde{X}_{\theta}\sin^3\theta + 2 d^3\tilde{X}\tilde{Y}\tilde{X}_{\theta}\sin^3\theta \\
 &\quad - 12 D\tilde{X}\tilde{Y}\tilde{X}_{\theta}\sin^3\theta + c^3\tilde{Y}^2\tilde{X}_{\theta}\sin^3\theta + 8 C\tilde{Y}^2\tilde{X}_{\theta}\sin^3\theta \\
 &\quad + cd^2\tilde{Y}^2\tilde{X}_{\theta}\sin^3\theta + \frac{9 c\tilde{X}^2\tilde{I}_{u}^2\tilde{X}_{\theta}\sin^3\theta}{2} + \frac{9 c\tilde{Y}^2\tilde{I}_{u}^2\tilde{X}_{\theta}\sin^3\theta}{2} \\
 &\quad - 2 D\tilde{X}^2\tilde{Y}_{\theta}\sin^3\theta + 4 c^3\tilde{X}\tilde{Y}\tilde{Y}_{\theta}\sin^3\theta + 8 C\tilde{X}\tilde{Y}\tilde{Y}_{\theta}\sin^3\theta \\
 &\quad + 4 cd^2\tilde{X}\tilde{Y}\tilde{Y}_{\theta}\sin^3\theta + 4 c^2d\tilde{Y}^2\tilde{Y}_{\theta}\sin^3\theta + 4 d^3\tilde{Y}^2\tilde{Y}_{\theta}\sin^3\theta \\
 &\quad + 6 D\tilde{Y}^2\tilde{Y}_{\theta}\sin^3\theta + \frac{9 d\tilde{X}^2\tilde{I}_{u}^2\tilde{Y}_{\theta}\sin^3\theta}{2} + \frac{9 d\tilde{Y}^2\tilde{I}_{u}^2\tilde{Y}_{\theta}\sin^3\theta}{2}.            
\end{align*}

\bigskip

\footnotesize {

\noindent {\bf Acknowledgement} The work is supported by the National Natural Science Foundation of China 12326602.


}


\end{document}